\newcommand{\REV}[1]{\ensuremath{\overline{#1}}}
\newcommand{\RLCSA}{\ensuremath{\mathsf{RLBWT}}}
\newcommand{\ST}{\ensuremath{\mathsf{ST}}}
\newcommand{\SA}{\ensuremath{\mathsf{SA}}}
\newcommand{\BWT}{\ensuremath{\mathsf{BWT}}}
\newcommand{\CDAWG}{\ensuremath{\mathsf{CDAWG}}}
\newcommand\SP[1]{\mathtt{sp}(#1)}
\newcommand\EP[1]{\mathtt{ep}(#1)} 
\newcommand\INTERVAL[1]{\mathtt{range}(#1)}
\newcommand{\INTERVALFUNCTION}{\ensuremath{\mathbb{I}}}
\newcommand{\occ}
  {\ensuremath{\mathsf{occ}}}
\newtheorem{definition}{Definition}
\newtheorem{lemma}{Lemma} 
\newtheorem{theorem}{Theorem}
\newtheorem{property}{Property}
\newtheorem{corollary}{Corollary}
\newcommand{\newe}{e}
\newcommand{\newel}{e^{\ell}}
\newcommand{\newr}{r}
\newcommand{\newrbar}{\overline{r}}
\let\doendproof\endproof
\renewcommand\endproof{~\hfill\qed\doendproof}
\begin{document} 

\title{Composite repetition-aware data structures}

\author[1,2]{Djamal Belazzougui}
\author[1,2]{Fabio Cunial}
\author[1,2]{Travis Gagie}
\author[3]{Nicola Prezza}
\author[4]{Mathieu Raffinot}
\affil[1]{Department of Computer Science, University of Helsinki, Finland.\thanks{This work was partially supported by Academy of Finland under grant 250345 (Center of Excellence in Cancer Genetics Research).}}
\affil[2]{Helsinki Institute for Information Technology, Finland.}
\affil[3]{Department of Mathematics and Computer Science, University of Udine, Italy.}
\affil[4]{LIAFA, Paris Diderot University - Paris 7, France.}
\maketitle

\begin{abstract}
In highly repetitive strings, like collections of genomes from the
same species, distinct measures of repetition all grow sublinearly in
the length of the text, and indexes targeted to such strings typically
depend only on one of these measures. We describe two data structures whose size
depends on multiple measures of repetition at once, and that provide
competitive tradeoffs between the time for counting and reporting all the exact
occurrences of a pattern, and the space taken by the structure. The key component of our constructions is
the run-length encoded BWT (RLBWT), which takes space proportional to
the number of BWT runs: rather than augmenting RLBWT with suffix array
samples, we combine it with data structures from LZ77 indexes, which
take space proportional to the number of LZ77 factors, and with the
compact directed acyclic word graph (CDAWG), which takes space
proportional to the number of extensions of maximal repeats. The
combination of CDAWG and RLBWT enables also a new representation of
the suffix tree, whose size depends again on the number of extensions
of maximal repeats, and that is powerful enough to support matching
statistics and constant-space traversal.


\end{abstract}

\section{Introduction}\label{sec:introduction}

The space taken by compressed data structures for highly-repetitive
strings is typically a function of a specific measure of repetition,
for example the number $z$ of factors in a Lempel-Ziv parsing
\cite{arroyuelo2012stronger,kreft2013compressing}, or the number $r$
of runs in a Burrows-Wheeler transform \cite{MakinenNSV10}. For many
such compressed data structures, computing all the occurrences of a
pattern in the indexed string is a bottleneck. In this paper we
explore the advantages of \emph{combining data structures that depend
on distinct measures of repetition}. Specifically, we describe a
data structure that takes approximately $O(z+r)$ words of space, and
that reports all the occurrences of a pattern of length $m$ in
$O(m(\log{\log{n}} + \log{z}) + \mathtt{pocc}\log^{\epsilon}{z} +
\mathtt{socc}\log{\log{n}})$ time, where $n$ is the length of the
string and $\mathtt{pocc}$ and $\mathtt{socc}$ are the number of
primary and of secondary occurrences, respectively (see Section
\ref{sec:stringIndexes} for definitions). This compares favorably to
the $O(m^{2}h+(m+\mathtt{occ})\log{z})$ reporting time of LZ77 indexes
\cite{kreft2013compressing}, where $h$ is the height of the parse
tree. It also compares favorably in space to solutions based on
run-length encoded BWT (RLBWT) and suffix array samples
\cite{MakinenNSV10}, which take $O(n/k + r)$ words of space to achieve
$O(m\log{\log{n}} + k \cdot \mathtt{occ}\log{\log{n}})$ reporting
time, where $k$ is a sampling rate. 

We also introduce a new measure of the repetitiveness of a string, the
number $\newe$ of right extensions of maximal repeats, which is related to the
number of arcs in the compact directed acyclic word-graph (CDAWG) and
which is an upper bound on $r$ and $z$. We show a data structure whose
size depends on $\newe$ and that reports all the $\mathtt{occ}$
occurrences of a pattern of length $m$ in a string of length $n$ in
$O(m\log{\log{n}} + \mathtt{occ})$ time. The main component of our
constructions is the RLBWT, which we use to count the number of
occurrences of a pattern, and which we combine with the CDAWG and with
data structures from LZ indexes, rather than with suffix array
samples, for reporting. Similar combinations have already appeared in
the literature, but their space has been related to statistical
compressibility rather than to the number of repetitions: for example,
an FM-index has already been combined with an LZ78 self-index to
achieve faster search or reporting
\cite{arroyuelo2012stronger,ferragina2005indexing}, but the size of
the resulting data structure depends on $k$-th order empirical
entropy.

Combining the RLBWT with the CDAWG enables also a new representation
of the suffix tree, which takes space proportional to $\newe+\newel$
(where $\newel$ is the number of left extensions of maximal repeats) and which
supports a number of operations in $O(\log{\log{n}})$ time. Among
other properties, this new representation allows computing the
matching statistics of a pattern of length $m$ in $O(m\log{\log{n}})$
time.
Our constructions are targeted to highly-repetitive strings, like
large databases of similar genomes, in which all the measures of
repetition on which our data structures depend grow sublinearly in the
size of the database (see Figure \ref{fig:experiments} for an
example).

\begin{figure}[t]
\begin{center}
\includegraphics[width = 1\textwidth]{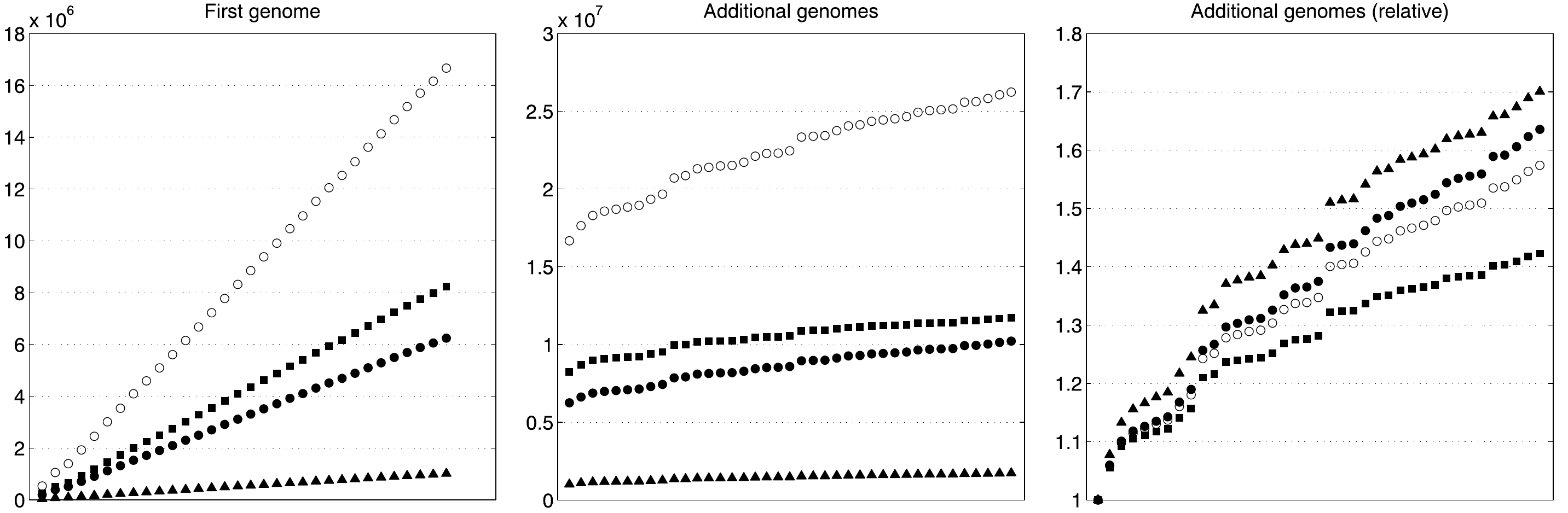}
\caption{Growth of the number of maximal repeats $|\mathcal{M}_T|$ (black circles), of $|\mathcal{E}^{r}_T \cup \mathcal{F}^{r}_T|$ (white circles, $e$ in the introduction), of the number of runs in BWT $|\mathcal{R}_T|$ (squares, $r$ in the introduction), and of $|\mathcal{Z}_T|$ (triangles, $z$ in the introduction) in a concatenation $T$ of 39 highly similar \emph{Saccharomyces cerevisiae} genomes \cite{pizzachili} (see Section \ref{sec:preliminaries} for definitions). Left: growth inside the first genome of the database. Center: growth after the addition of each genome (one sample per genome). Right: the same as the plot in the center, but with each curve normalized by its first sample. $|\mathcal{E}^{\ell}_T \cup \mathcal{F}^{\ell}_T|$, $|\mathcal{R}_{\REV{T}}|$ and $|\mathcal{Z}_{\REV{T}}|$ are not shown since they behave approximately as their symmetrical counterparts.
\label{fig:experiments}
}
\end{center}
\vspace{-1cm}
\end{figure}

\section{Preliminaries}\label{sec:preliminaries}

Let $\Sigma=[1..\sigma]$ be an integer alphabet, let $\#=0 \notin \Sigma$ be a separator, and let $T=[1..\sigma]^{n-1}\#$ be a string. We denote the reverse of $T$ by $\REV{T}$. Given a substring $W$ of $T$, let $\mathcal{P}_{T}(W)$ be the set of all starting positions of $W$ in the circular version of $T$. A \emph{repeat} $W$ is a string that satisfies $|\mathcal{P}_{T}(W)|>1$. We denote by $\Sigma^{\ell}_{T}(W)$ the set of characters $\{a \in [0..\sigma] : |\mathcal{P}_{T}(aW)|>0\}$ and by $\Sigma^{r}_{T}(W)$ the set of characters $\{b \in [0..\sigma] : |\mathcal{P}_{T}(Wb)|>0\}$. A repeat $W$ is \emph{right-maximal} (respectively, \emph{left-maximal}) iff $|\Sigma^{\ell}_{T}(W)|>1$ (respectively, iff $|\Sigma^{r}_{T}(W)|>1$). It is well known that $T$ can have at most $n-1$ right-maximal substrings and at most $n-1$ left-maximal substrings. A \emph{maximal repeat} of $T$ is a repeat that is both left- and right-maximal: we call $\mathcal{M}_{T}$ the set of all maximal repeats of $T$. A maximal repeat $W$ can be seen as a set of right-maximal substrings of $T$, and specifically as the set of all right-maximal strings $W[i..|W|]$ for $i \in [1..k]$ that are not left-maximal, and such that $W[k+1..|W|]$ is left-maximal.

For reasons of space we assume the reader to be familiar with the notion of \emph{suffix tree} $\ST_T = (V,E)$ of $T$, which we do not define here. We denote by $\ell(\gamma)$, or equivalently by $\ell(u,v)$, the label of edge $\gamma=(u,v) \in E$, and we denote by $\ell(v)$ the string label of node $v \in V$. It is well known that a substring $W$ of $T$ is right-maximal (respectively, left-maximal) iff $W=\ell(v)$ for some internal node $v$ of $\ST_T$ (respectively, iff $W=\REV{\ell(v)}$ for some internal node $v$ of $\ST_{\REV{T}}$). We assume the reader to be familiar with the notion of \emph{suffix link} connecting a node $v$ with $\ell(v)=aW$ for some $a \in [0..\sigma]$ to a node $w$ with $\ell(w)=W$: we say that $w=\mathtt{suffixLink}(v)$ in this case. Here we just recall that inverting the direction of all suffix links yields the so-called \emph{explicit Weiner links}. Given an internal node $v$ and a symbol $a \in [0..\sigma]$, it might happen that string $a\ell(v)$ does occur in $T$, but that it is not right-maximal, i.e. it is not the label of any internal node: all such left extensions of internal nodes that end in the middle of an edge are called \emph{implicit Weiner links}. An internal node can have more than one outgoing Weiner link, and all such Weiner links have distinct labels. 

The \emph{compact directed acyclic word graph} of a string $T$ (denoted by $\CDAWG_T$ in what follows) is the minimal compact automaton representing the set of suffixes of a given string \cite{blumer1987complete,CrochemoreV97}. It can be seen as the minimization of $\ST_T$, in which all leaves are merged to the same node (the sink) that represents $T$ itself, and in which all nodes except the sink are in one-to-one correspondence with the maximal repeats of $T$ \cite{Raffinot2001}. Since a maximal repeat corresponds to a set of right-maximal substrings, $\CDAWG_T$ can be built by putting in the
same equivalence class all nodes of $\ST_T$ that belong to the same maximal unary path of explicit Weiner links.

For reasons of space we assume the reader to be familiar with the notion and uses of the Burrows-Wheeler transform of $T$, including the $C$ array and backward searching. In this paper we use $\BWT_T$ to denote the BWT of $T$, and we use $\INTERVAL{W} = [\SP{W}..\EP{W}]$ to denote the lexicographic interval of a string $W$ in a BWT that is implicit from the context. We say that $\BWT_{T}[i..j]$ is a \emph{run} iff $\BWT_{T}[k]=c \in [0..\sigma]$ for all $k \in [i..j]$, and moreover if any substring $\BWT_{T}[i'..j']$ such that $i' \leq i$, $j' \geq j$, and either $i' \neq i$ or $j' \neq j$, contains at least two distinct characters. It is well known that repetitions in $T$ tend to be converted into runs of $\BWT_T$. 
We denote by $\mathcal{R}_{T}$ the set of all triplets $(c,i,j)$ such that $\BWT_{T}[i..j]$ is a run of character $c$, and we use $\newr_T$ and $\newrbar_T$ as shorthands for $|\mathcal{R}_T|$ and $|\mathcal{R}_{\REV{T}}|$, respectively.

The \emph{LZ77 factorization} of $T$ \cite{ziv1977universal} is the greedy decomposition $T_1 T_2 \cdots T_{z}$ of $T$ obtained as follows. Assume that $T$ is virtually preceded by the $\sigma$ distinct characters in its alphabet, and assume that $T_1 T_2 \cdots T_i$ has already been computed for some prefix of length $k$ of $T$: then, $T_{i+1}$ is the longest prefix of $T[k+1..n]$ such that there is a $j \leq k$ that satisfies $T[j..j+|T_{i+1}|-1] = T_{i+1}$. We denote by $\mathcal{Z}_T$ the set of pairs $(T_i,p_i)$ for all $i \in [1..z]$, where $p_i$ is the starting position of $T_i$ in $T$, and we use $z_T$ as a shorthand for $|\mathcal{Z}_T|$. From now on, we drop subscripts whenever the string $T$ they specify is clear from the context.

\subsection{Relationships among maximal repeats, runs in BWT, and LZ factors}

Clearly $|\mathcal{R}|$ can be as small as two, e.g. in string $\mathtt{0}^{n-1}\#$, and as large as $\Theta(n)$, e.g. in the string of length $n$ that contains exactly $n$ distinct characters, or in a de Bruijn string of order $k>1$ on a binary alphabet: this string of length $\sigma^k+k-1$ contains all the distinct $k$-mers, thus the interval of every $(k-1)$-mer in $\BWT_T$ contains exactly $\sigma$ distinct characters, and the number of runs in $\BWT_T$ is thus at least $\sigma^{k-1}(k-1)$. It is known that $|\mathcal{Z}|$ is $O(n/\log_{\sigma}n)$ \cite{lempel1976complexity}, and it can be constant, e.g. in $\mathtt{0}^{n-1}\#$. Conversely, $|\mathcal{M}|$ can be zero, e.g. in a string of length $n$ that contains exactly $n$ distinct characters, and it can be $\Theta(n)$ in the worst case, e.g. in string $\mathtt{0}^{n-1}\#$. When maximal repeats exist, the number of \emph{right extensions of maximal repeats} $\sum_{W \in \mathcal{M}}|\Sigma^{r}(W)|$ is $\Omega(\log{n})$ (see Lemma \ref{lemma:nArcsInCDAWG} in the appendix), and this lower bound is matched by Fibonacci strings and by Thue-Morse strings of length $n$, whose CDAWG contains $O(\log{n})$ nodes \cite{Radoszewski2012,Rytter06}. Both $|\mathcal{M}|/|\mathcal{R}|$ and $|\mathcal{M}|/|\mathcal{Z}|$ can be $\Theta(n)$, for example in the already mentioned $\mathtt{0}^{n-1}\#$. $|\mathcal{R}|/|\mathcal{Z}|$ can be $\Theta(\log n)$, e.g. in the already mentioned de Bruijn string $T$ of order $k$, which has $\Theta(n/\log_{\sigma}{n})$ LZ factors. However, $|\mathcal{M}|$, $|\mathcal{R}|$ and $|\mathcal{Z}|$ can all grow at the same asymptotic rate in the same family of strings. Consider e.g. string $T = \mathtt{0}^{1}\mathtt{1}\mathtt{0}^{2}\mathtt{1} \cdots \mathtt{0}^{x}\mathtt{1}\#$ of length $x(x+3)/2+1$. Clearly $|\mathcal{Z}|=x+3$, and $|\mathcal{M}|=3(x-1)$ since the maximal repeats of $T$ are only the substrings $\mathtt{0}^{i}\mathtt{1}$ for $i \in [1..x-1]$, $\mathtt{0}^{j}$ for $j \in [1..x-1]$, and $\mathtt{0}^{k-1}\mathtt{1}\mathtt{0}^{k}$ for $k \in [2..x-1]$. Replacing $\#$ with a new block $\mathtt{0}^{x+1}\mathtt{1}\#$ in string $T$ creates two new runs for every $x>1$
, thus $|\mathcal{R}|=2x$ for $x>1$.

Recall that a substring $W$ of $T$ is a maximal repeat iff $W=\ell(v)$ for some internal node $v$ of $\ST_T = (V,E)$, and moreover if there are at least two Weiner links from $v$. Since the set of all left-maximal substrings of $T$ is closed under the prefix operation, there is a bijection between $\mathcal{M}$ and the nodes that lie on the paths of $\ST_T$ that start from the root and that end at nodes labeled by maximal repeats defined as follows:

\begin{definition}
A maximal repeat $W$ of a string $T \in [1..\sigma]^{n-1}\#$ is \emph{rightmost} if no string $WV$ with $V \in [0..\sigma]^+$ is left-maximal in $T$.
\end{definition}

We denote the set of rightmost maximal repeats of $T$ by $\mathcal{M}^{r}_{T}$. We also denote by $\mathcal{E}^{r}_T$ the set of edges of $\ST_T$ that connect pairs of nodes labeled by maximal repeats, and we denote by $\mathcal{F}^{r}_T$ the set of edges $(v,w)$ in $\ST_T$ such that $\ell(v) \in \mathcal{M}_T$ and $\ell(w) \notin \mathcal{M}_T$. We use $\mathcal{M}^{\ell}_{T}$, $\mathcal{E}^{\ell}_T$ and $\mathcal{F}^{\ell}_T$ to denote symmetrical concepts in $\ST_{\REV{T}}$, and we use $\newe_T$ and $\newel_T$ as shorthands for $|\mathcal{E}^{r}_T| + |\mathcal{F}^{r}_T|$ and for $|\mathcal{E}^{\ell}_T| + |\mathcal{F}^{\ell}_T|$, respectively. 
Clearly $\mathcal{E}^{r}$ and $\mathcal{F}^{r}$ are the image of explicit and implicit Weiner links of $\ST_{\REV{T}}$:

\begin{lemma}\label{lemma:correspondence}
Let $\ST_T=(V,E)$. There is a bijection between $\mathcal{E}^{r}_T$ and the set of all explicit Weiner links from nodes of $\ST_{\REV{T}}$ that correspond to maximal repeats of $T$. There is a bijection between $\mathcal{F}^{r}_T$ and the set of all implicit Weiner links from nodes of $\ST_{\REV{T}}$ that correspond to maximal repeats of $T$.
\end{lemma}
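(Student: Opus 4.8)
The plan is to translate between edges of $\ST_T$ incident to maximal-repeat nodes and Weiner links of $\ST_{\REV{T}}$ by way of the classical duality between right-extensions in $T$ and left-extensions in $\REV{T}$. Recall the stated fact that a string $W$ is right-maximal in $T$ iff $W = \ell(v)$ for some internal node $v$ of $\ST_T$, and left-maximal in $T$ iff $\REV{W} = \ell(u)$ for some internal node $u$ of $\ST_{\REV{T}}$; hence $W$ is a maximal repeat of $T$ iff both $W = \ell(v)$ for an internal node $v$ of $\ST_T$ and $\REV{W} = \ell(u)$ for an internal node $u$ of $\ST_{\REV{T}}$. Fix such a maximal repeat $W$ and an edge $(v,w)$ of $\ST_T$ leaving $v = $ the node with $\ell(v)=W$. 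Let $b = \ell(v,w)[1]$ be the first character of the edge label, so $Wb$ occurs in $T$; equivalently $b\REV{W}$ occurs in $\REV{T}$, i.e. there is a Weiner link from $u$ in $\ST_{\REV{T}}$ labeled $b$. This assignment $(v,w) \mapsto (u,b)$ is the candidate bijection for both statements at once; the content of the lemma is to show it restricts correctly to $\mathcal{E}^r_T$ on one side and $\mathcal{F}^r_T$ on the other.

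First I would argue well-definedness and injectivity of the map on all out-edges of maximal-repeat nodes. Distinct out-edges of $v$ have distinct first characters (a basic suffix-tree property), and distinct Weiner links of $u$ have distinct labels (stated in the Preliminaries), so the correspondence is injective; surjectivity onto the set of all Weiner links out of maximal-repeat nodes of $\ST_{\REV{T}}$ follows because every such link, with label $b$, certifies that $b\REV{W}$ occurs in $\REV{T}$, hence $Wb$ occurs in $T$, hence $Wb$ prefixes the label of exactly one out-edge of $v$. Thus we have a bijection between $\mathcal{E}^r_T \cup \mathcal{F}^r_T$ and the set of all Weiner links (explicit plus implicit) from maximal-repeat nodes of $\ST_{\REV{T}}$.

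The crux is then to match the explicit/implicit split of Weiner links with the $\mathcal{E}^r/\mathcal{F}^r$ split of edges. A Weiner link from $u$ with label $b$ is \emph{explicit} iff $b\REV{W} = \ell(u')$ for some internal node $u'$ of $\ST_{\REV{T}}$, i.e. iff $b\REV{W}$ is left-maximal in $\REV{T}$, i.e. iff $Wb$ is right-maximal in $T$. On the suffix-tree-of-$T$ side, consider the locus of $Wb$: since $Wb$ occurs in $T$ and properly extends the internal node $v$, its locus is either the child node $w$ (reached by reading the whole edge label, so $Wb$ is a prefix of $\ell(w)$ and actually $Wb = \ell(w)$ when the edge label has length one, and more generally $Wb$ is right-maximal precisely when its locus is a branching node) or a point strictly inside the edge $(v,w)$. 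Reading off the two cases: $Wb$ is right-maximal in $T$ iff its locus is the internal node $w$ iff $\ell(w)$ is also a maximal repeat — here one uses the key structural observation, already noted in the excerpt, that the nodes lying on root-to-maximal-repeat paths are exactly the maximal-repeat nodes, so that if $\ell(w)$ is right-maximal and an ancestor's label $W$ is a maximal repeat then $\ell(w)$ is left-maximal too, hence $(v,w)\in\mathcal{E}^r_T$. Conversely $Wb$ is not right-maximal in $T$ iff the locus is interior to $(v,w)$, which forces $\ell(w)$ to fail left-maximality (it cannot be a maximal repeat, else $Wb$, lying on the root-to-$\ell(w)$ path, would be left-maximal and, being a proper extension of $W$... — here I would spell out that $\ell(w)$ being a maximal repeat would make every prefix of $\ell(w)$ extending $W$ left-maximal, contradicting that $Wb$ is not even right-maximal), i.e. $(v,w)\in\mathcal{F}^r_T$. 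Hence the restriction of the bijection to explicit Weiner links lands exactly in $\mathcal{E}^r_T$ and the restriction to implicit Weiner links lands exactly in $\mathcal{F}^r_T$.

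The main obstacle I anticipate is the careful bookkeeping in that last step: one must be precise about what "right-maximal" means at the locus of $Wb$ when the edge label $\ell(v,w)$ has length greater than one (the locus is then interior and $Wb$ is automatically not branching, so not right-maximal), and one must correctly invoke the prefix-closure of the set of left-maximal strings to conclude that a child $w$ with $\ell(w)$ right-maximal is automatically a maximal repeat whenever a proper ancestor is. Everything else is a routine unwinding of the $T \leftrightarrow \REV{T}$ duality for occurrences of $Wb$ versus $b\REV{W}$.
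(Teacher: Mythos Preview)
Your overall scaffolding---build one bijection between all out-edges of maximal-repeat nodes and all Weiner links out of maximal-repeat nodes, then match the two splits---is fine, but the core step has two real errors that break the argument.

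First, you mischaracterize explicit Weiner links. Saying $b\REV{W}=\ell(u')$ for an internal node $u'$ of $\ST_{\REV{T}}$ means $b\REV{W}$ is \emph{right}-maximal in $\REV{T}$, not left-maximal; dually, the correct condition on the $T$ side is that $Wb$ is \emph{left}-maximal in $T$, not right-maximal. Everything you do afterwards tracks right-maximality of $Wb$ (i.e.\ whether the edge label has length one), which is the wrong invariant: an edge $(v,w)\in\mathcal{E}^r_T$ with $|\ell(v,w)|>1$ has $Wb$ not right-maximal yet the Weiner link is explicit; conversely an edge with $|\ell(v,w)|=1$ but $\ell(w)$ not left-maximal lies in $\mathcal{F}^r_T$ yet $Wb=\ell(w)$ is right-maximal.

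Second, you invoke prefix-closure of left-maximality in the wrong direction. The paper's observation is that if $\ell(w)$ is left-maximal then every prefix (in particular $W$) is left-maximal; you use the converse, concluding from ``an ancestor $W$ is a maximal repeat'' that $\ell(w)$ is left-maximal. That is false---it would force $\mathcal{F}^r_T$ to contain only edges to leaves. The step you actually need is: $Wb$ left-maximal $\Leftrightarrow$ $\ell(w)$ left-maximal. Forward, use that every occurrence of $Wb$ extends uniquely (along the edge) to an occurrence of $\ell(w)$, so distinct left-extensions of $Wb$ give distinct left-extensions of $\ell(w)$; backward, use genuine prefix-closure. This is essentially what the paper's proof does, phrased in terms of $\Sigma^r_{\REV{T}}$ and $\Sigma^\ell_{\REV{T}}$.
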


The proof of Lemma \ref{lemma:correspondence} is provided in the appendix. It is clear that the set of suffix tree edges $\mathcal{E}^{r}_T \cup \mathcal{F}^{r}_T$ is in one-to-one correspondence with the set of all arcs of $\CDAWG_T$. This set of edges is also related to runs in $\BWT_T$:

\begin{theorem}\label{thm:numberOfRuns}
$| [0..\sigma] \setminus \cup_{W \in \mathcal{M}^{r}_T} \Sigma^{\ell}_{T}(W) | + \sum_{W \in \mathcal{M}^{r}_T}|\Sigma^{\ell}_{T}(W)|-|\mathcal{M}^{r}_T|+1 \leq |\mathcal{R}_T| \leq |\mathcal{F}^{r}_T|$.
\end{theorem}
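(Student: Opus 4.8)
The plan is to translate the statement into a purely combinatorial fact about the suffix tree $\ST_T$ and the lexicographic intervals that back $\BWT_T$. I will assume $\mathcal{M}_T\neq\emptyset$, which is implicit in the statement (otherwise $|\mathcal{F}^{r}_{T}|=0<|\mathcal{R}_T|$). Under this assumption, the loci of maximal repeats form a connected, ancestor‑closed subtree $\mathcal{T}$ of $\ST_T$ containing the root: any ancestor of a maximal‑repeat node is a node (hence right‑maximal) whose label is a prefix of a left‑maximal string (hence left‑maximal by the stated prefix‑closure of left‑maximal substrings), so it is itself a maximal repeat. Consequently the leaves of $\mathcal{T}$ are exactly the loci of $\mathcal{M}^{r}_{T}$ (a node $v$ with $\ell(v)=W$ has no proper maximal‑repeat descendant iff no $WV$, $V\in[0..\sigma]^+$, is left‑maximal, because left‑maximal strings extending $W$ sit at, or in the middle of the edge entering, maximal‑repeat descendants of $v$), the internal edges of $\mathcal{T}$ are $\mathcal{E}^{r}_{T}$, and the ``fringe'' edges $(v,w)$ of $\ST_T$ with $v\in\mathcal{T}$, $w\notin\mathcal{T}$ are exactly $\mathcal{F}^{r}_{T}$. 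The other ingredient is the standard identity: for any string $W$ occurring in $T$, the characters appearing in $\BWT_T[\INTERVAL{W}]$ are exactly those of $\Sigma^{\ell}_{T}(W)$, and $\INTERVAL{W}=\INTERVAL{\ell(w)}$ when $W$ ends in the middle of the edge into node $w$.

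For the upper bound $|\mathcal{R}_T|\le|\mathcal{F}^{r}_{T}|$, the key step is that the intervals $J_f:=\INTERVAL{\ell(w)}$, for $f=(v,w)$ ranging over $\mathcal{F}^{r}_{T}$, form a partition of $[1..n]$ into $|\mathcal{F}^{r}_{T}|$ contiguous ranges, each of which is \emph{monochromatic} in $\BWT_T$. The partition property follows from ancestor‑closedness of $\mathcal{T}$: the root‑to‑leaf path of every leaf of $\ST_T$ starts inside $\mathcal{T}$ and ends outside it (leaf labels occur once, hence are not maximal repeats), so it crosses exactly one fringe edge $f$, and that leaf lies in $J_f$. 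Monochromaticity holds because $w\notin\mathcal{T}$ means $\ell(w)$ is right‑maximal but not left‑maximal, i.e.\ $|\Sigma^{\ell}_{T}(\ell(w))|=1$, so $\BWT_T[J_f]$ is a constant string. Each $J_f$ is therefore contained in a single BWT run, and since the $J_f$ cover $[1..n]$, every run is a union of $J_f$'s; hence $|\mathcal{R}_T|\le|\mathcal{F}^{r}_{T}|$.

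For the lower bound, note that the loci of $\mathcal{M}^{r}_{T}$ are pairwise non‑ancestor‑related in $\ST_T$ (if $W_1$'s locus were an ancestor of $W_2$'s, then $W_2=W_1V$ with $V$ nonempty would be left‑maximal, contradicting $W_1\in\mathcal{M}^{r}_{T}$), so the intervals $I_W:=\INTERVAL{W}$ for $W\in\mathcal{M}^{r}_{T}$ are pairwise disjoint, and $\BWT_T[I_W]$ contains exactly the $|\Sigma^{\ell}_{T}(W)|$ distinct characters of $\Sigma^{\ell}_{T}(W)$, hence has at least that many runs. Every character $a\in[0..\sigma]\setminus\bigcup_{W\in\mathcal{M}^{r}_{T}}\Sigma^{\ell}_{T}(W)$ occurs in $T$, hence in $\BWT_T$, but outside every $I_W$. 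Extract from $\BWT_T$ the subsequence $Y$ consisting of all positions in $\bigcup_W I_W$ together with one position carrying each such ``leftover'' character; deleting characters never increases the number of runs, so $|\mathcal{R}_T|\ge(\text{runs of }Y)$. In $Y$, the kept leftover characters are pairwise distinct and distinct from every character of every block $\BWT_T[I_W]$, so each forms its own run and cannot be merged with a neighbour, while consecutive blocks merge at most $|\mathcal{M}^{r}_{T}|-1$ times; this gives exactly $|\mathcal{R}_T|\ge\sum_{W\in\mathcal{M}^{r}_{T}}|\Sigma^{\ell}_{T}(W)|-|\mathcal{M}^{r}_{T}|+1+|[0..\sigma]\setminus\bigcup_{W}\Sigma^{\ell}_{T}(W)|$.

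The main obstacle is really the structural dictionary of the first paragraph — in particular proving that the fringe intervals $J_f$ partition $[1..n]$ and are each monochromatic, and that the $\mathcal{M}^{r}_{T}$‑intervals are disjoint with the stated character content; once these are established, the run counting is elementary, the only mild care needed being the accounting of at most $|\mathcal{M}^{r}_{T}|-1$ block merges in $Y$ and the observation that leftover characters participate in no merge. I would also verify the degenerate cases (no maximal repeats; $T$ too short; characters of $[0..\sigma]$ not occurring in $T$), which the statement implicitly rules out by assuming $\mathcal{M}_T\neq\emptyset$ over the effective alphabet.
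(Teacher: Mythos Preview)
Your proposal is correct and follows essentially the same route as the paper. For the upper bound you reproduce the paper's argument verbatim (fringe edges of the maximal-repeat subtree partition $\BWT_T$ into monochromatic blocks), only spelling out more carefully why $\mathcal{T}$ is ancestor-closed and why every leaf crosses exactly one fringe edge. For the lower bound you rely on the same two structural facts the paper uses --- that the intervals of rightmost maximal repeats are pairwise disjoint and that $\BWT_T[I_W]$ contains exactly the characters of $\Sigma^{\ell}_T(W)$ --- and then do the bookkeeping via the ``delete characters to form $Y$'' device instead of the paper's direct count of run starts inside each $I_W$; both arrive at the same $-|\mathcal{M}^r_T|+1$ merge correction, and your handling of the leftover characters is just a more explicit version of the paper's one-sentence remark.
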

\begin{proof}
The root of $\ST_T$ is a maximal repeat, thus the destinations of all edges in $\mathcal{F}^{r}$ partition all leaves of $\ST_T$ into disjoint subtrees, or equivalently they partition the entire $\BWT_T$ in disjoint blocks. Since every such block is the interval in $\BWT_T$ of some string that is not left-maximal, all characters of $\BWT_T$ in the same block are identical, thus the number of runs in $\BWT_T$ cannot be bigger than $|\mathcal{F}^{r}|$.

The interval of a string $W \in \mathcal{M}^r$ in $\BWT_T$ contains exactly $|\Sigma^{\ell}(W)|$ distinct characters, and at most one of them is identical to the character that precedes the largest suffix of $T$ smaller than $W$ in lexicographic order (note that such suffix might not be prefixed by any string in $\mathcal{M}^r$). Thus, the number of runs in $\BWT_T$ is at least $\sum_{W \in \mathcal{M}^{r}}|\Sigma^{\ell}(W)|-|\mathcal{M}^{r}|+1$. Factor $[0..\sigma] \setminus \cup_{W \in \mathcal{M}^{r}} \Sigma^{\ell}_{T}(W)$ in the claim takes into account symbols of $T$ that never occur to the left of strings in $\mathcal{M}^r$.
\end{proof}

A symmetrical argument holds for $\mathcal{R}_{\REV{T}}$. The set of arcs in $\CDAWG_T$ is also related to the LZ factorization of $T$:


\begin{theorem}
$|\mathcal{Z}_T| \leq |\mathcal{E}^r_T \cup \mathcal{F}^r_T|$
\end{theorem}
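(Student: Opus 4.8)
The plan is to build an injection from the $|\mathcal{Z}_T|$ factors of the LZ77 parsing of $T$ into the arcs of $\CDAWG_T$; since these arcs are in one-to-one correspondence with $\mathcal{E}^{r}_T \cup \mathcal{F}^{r}_T$, this proves the claim. It helps to identify an arc of $\CDAWG_T$ with a pair $(M,b)$, where $M$ is a non-sink node of $\CDAWG_T$ --- that is, the source $\varepsilon$ or a maximal repeat of $T$ --- and $b \in \Sigma^{r}_{T}(M)$ is the first character of the arc's label; since $\CDAWG_T$ is deterministic, the number of such pairs is exactly $|\mathcal{E}^{r}_T \cup \mathcal{F}^{r}_T|$. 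Write the factorization as $T = T_1 T_2 \cdots T_{z}$ with $T_j = T[p_j..p_j + \ell_j - 1]$.

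I would split the factors in two. Call $T_j$ \emph{fresh} if $\ell_j = 1$ and $T[p_j]$ does not occur in $T[1..p_j-1]$; distinct fresh factors introduce distinct characters (an earlier fresh factor with the same character would witness an occurrence of it before the later factor's start), so $T_j \mapsto (\varepsilon, T[p_j])$ injects the fresh factors into the arcs leaving the source. For a non-fresh factor $T_j$, let $W_j = T[p_j..p_j+\ell_j-1]$. Then $W_j$ occurs starting at some $q_j < p_j$, and $T_j$ cannot end at position $n$ (it would contain the unique separator), so $c_j := T[p_j + \ell_j]$ is defined; greediness of the parse gives that $W_j c_j = T[p_j..p_j+\ell_j]$ does not occur starting before $p_j$, so the character following $W_j$ at the occurrence $q_j$ differs from $c_j$, and hence $W_j$ is a right-maximal repeat. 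By the description of $\CDAWG_T$ recalled in Section~\ref{sec:preliminaries}, the locus of $W_j$ in $\ST_T$ lies in the equivalence class of a unique maximal repeat $M_j$, with $M_j = U_j W_j$ for some string $U_j$ such that every occurrence of $W_j$ in the circular $T$ is immediately preceded by $U_j$. Thus $M_j c_j = U_j W_j c_j$ occurs in $T$, so $c_j \in \Sigma^{r}_{T}(M_j)$, and I map $T_j$ to the arc $(M_j, c_j)$. Since $W_j$ is non-empty while the equivalence class of $\varepsilon$ is trivial, $M_j \neq \varepsilon$, so this map has image disjoint from that of the first.

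It then remains to prove that $T_j \mapsto (M_j, c_j)$ is injective on the non-fresh factors. Suppose $j \neq j'$ with $(M_j, c_j) = (M_{j'}, c_{j'}) =: (M, c)$; without loss of generality $|W_j| \le |W_{j'}|$. Both $W_j$ and $W_{j'}$ are suffixes of $M$, so $W_{j'} = u\, W_j$ for some $u$, which is then a suffix of $U_j$. If $j < j'$: the occurrence of $W_j$ at $p_j$ is preceded by $u$ and followed by $c$, so $W_{j'} c$ occurs starting at $p_j - |u|$, which is strictly less than $p_{j'}$; this contradicts the greedy maximality of the factor $T_{j'}$. If $j > j'$: the occurrence of $W_{j'}$ at $p_{j'}$ contains an occurrence of $W_j c$ starting at $p_{j'} + |u| \le p_{j'} + \ell_{j'} - 1 = p_{j'+1} - 1 < p_j$; this contradicts the greedy maximality of $T_j$. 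In both cases we reach a contradiction, so the map is injective; adding the counts for the two kinds of factors gives $|\mathcal{Z}_T| \le |\mathcal{E}^{r}_T \cup \mathcal{F}^{r}_T|$.

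I expect the injectivity argument to be the main obstacle. It uses the structural fact that along a $\CDAWG_T$-equivalence class every occurrence of the shorter string is a fixed left-extension of an occurrence of the longer one, and it requires verifying that the occurrences produced above are genuine, non-wrapping occurrences in $T$ that start strictly before the relevant factor boundary; the non-wrapping part follows from $|U_j| < p_j$, which holds because a maximal repeat, being a repeat, cannot contain the unique separator, so $U_j$ cannot reach past position $n$. The borderline subcases ($W_j = W_{j'}$, or $U_j = \varepsilon$) are absorbed by the same inequalities, and everything else is routine bookkeeping about the LZ77 parse and the structure of $\CDAWG_T$.
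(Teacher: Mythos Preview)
Your proof is correct and follows essentially the same strategy as the paper: map each LZ factor to the CDAWG arc determined by the maximal repeat to which the factor belongs (as a right-maximal string) together with the character immediately following the factor, and use greediness of the parse to establish injectivity. Your version is in fact more careful than the paper's, since you treat first-occurrence single-character factors separately (mapping them to arcs out of the source) and you spell out both orderings $j<j'$ and $j>j'$ in the injectivity argument, together with the non-wrapping check via the separator.
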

\begin{proof}
Let $T=T_1 T_2 \dots T_z$ be the LZ factorization of $T$, and let $p_1,p_2,\dots,p_z$ be the sequence such that $p_i$ is the starting position of factor $T_i$ in $T$. Every factor is a right-maximal substring of $T$, but it is not necessarily left-maximal: let $W_i$ be a suffix of $T[1..p_i-1]$ such that $W_{i}T_i$ is both right-maximal and left-maximal, and assume that we assign $T_i$ to the edge $(v,w)$ in $\mathcal{E}^r_T \cup \mathcal{F}^r_T$ such that $\ell(v)=W_{i}T_i$, $v=\mathtt{parent}(w)$, and the first character of $T_{i+1}$ equals the first character of $\ell(v,w)$. Assume that there is some $j>i$ for which we assign $T_j$ to the same maximal repeat $W_{i}T_i$. Then, the first character of $T_{j+1}$ must be different from the first character of $T_{i+1}$, otherwise factor $T_j$ would have been longer. It follows that every LZ factor can be assigned to a distinct element of $\mathcal{E}^r_T \cup \mathcal{F}^r_T$.
\end{proof}


The gap between $\newr$ 
and $e$, and between $z$ and $e$, is apparent from Figure \ref{fig:experiments} (center). However, all these measures seem to grow at the same relative rate in practice (right panel).

\subsection{Repetition-aware data structures}\label{sec:stringIndexes}

Given a string $T \in [1..\sigma]^{n-1}\#$, we call \emph{run-length encoded BWT} any representation of $\BWT_T$ that takes $O(|\mathcal{R}_T|)$ words of space, and that supports rank and select operations: see for example \cite{makinen2005succinct1,MakinenNSV10,SirenVMN08}. Let $\mathcal{R}_T$ be a set of triplets $(c,i,j)$ such that $\BWT_{T}[i..j]$ is a run of character $c$. It is easy to implement rank in $O(\log{\log{n}})$ time, by encoding $\mathcal{R}_T$ as $\sigma+1$ predecessor data structures \cite{Wi83}, each of which stores the second component of all triplets with the same first component. For every such second component $i$, we also store in an array the sum of all occurrences of $c$ up to $i$, exclusive. To implement select in $O(\log{\log{n}})$ time, we can similarly encode $\mathcal{R}_T$ as $\sigma+1$ predecessor data structures, each of which stores value $\mathtt{rank}_{c}(\BWT_T,i-1)$ for all triplets $(c,i,j)$ with the same value of $c$. We also store the value of $i$ for every such triplet. We denote the run-length encoded BWT of $T$ by $\RLCSA_T$.
%

For reasons of space we assume the reader to be familiar with LZ77-indexes: see e.g. \cite{karkkainen1996lempel,gagie2014lz77}. Here we just recall that a \emph{primary occurrence} of a pattern $P$ in a string $T \in [1..\sigma]^{n-1}\#$ is one that crosses or ends at a phrase boundary in the LZ77 factorization $T_1 T_2 \cdots T_{z}$ of $T$. All other occurrences are called \emph{secondary}. Once we have determined all primary occurrences, locating secondary occurrences reduces to two-sided range reporting and takes $O(\occ \log{\log{n}})$ time with a data structure that takes $O(z)$ words of space \cite{karkkainen1996lempel}. To locate primary occurrences, we can use a data structure for four-sided range reporting on a \(z \times z\) grid, with a marker at \((x, y)\) if the $x$th LZ factor in lexicographic order is preceded in the text by the lexicographically $y$th reversed prefix ending at a phrase boundary. This data structure takes $O(z)$ words of space, and it returns all the phrase boundaries immediately followed by a factor in the specified range, and immediately preceded by a reversed prefix in the specified range, in $O((1+k)\log^{\epsilon}{z})$ time, where $k$ is the number of phrase boundaries reported \cite{chan2011orthogonal}.

\section{Combining runs in BWT and LZ factors}\label{sec:lz77}

In this section we describe how to combine data structures whose size depends on the number of LZ factors of a string $T \in [1..\sigma]^{n-1}\#$, and data structures whose size depends on the number of runs in $\BWT_T$, to report all the occurrences of a pattern in $T$. To do so, we first need to solve the following subproblem. Let $\ST_{T}=(V,E)$ be the suffix tree of $T$, and let $V'=\{v_1,v_2,\dots,v_k\} \subseteq V$ be a subset of the nodes of $\ST_T$. Consider the list of node labels $L = \ell(v_1),\ell(v_2),\dots,\ell(v_k)$, sorted in lexicographic order. Given a string $W \in [0..\sigma]^*$, we want to implement function $\INTERVALFUNCTION(W,V')$ that returns the (possibly empty) interval of $W$ in $L$. The following lemma describes how to do this in $O(k)$ words of space:

\begin{lemma}\label{lemma:intervalOperation2}
Let $T \in [1..\sigma]^{n-1}\#$ be a string, and let $V'$ be a subset of $k$ nodes of its suffix tree, represented as intervals in $\BWT_T$. Given the interval $[i..j]$ of a string $W \in [0..\sigma]^*$ in $\BWT_T$, there is a data structure that takes $O(k)$ words of space and that computes $\INTERVALFUNCTION(W,V')$ in $O(\log{k})$ time.
\end{lemma}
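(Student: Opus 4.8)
The plan is to sort the node labels of $V'$ into the lexicographically sorted list $L = \ell(v_1), \dots, \ell(v_k)$ once at construction time, and then answer $\INTERVALFUNCTION(W, V')$ by binary search over $L$. The key point is that a node label $\ell(v_i)$ is itself a substring of $T$, hence it has its own BWT interval $\INTERVAL{\ell(v_i)} = [\SP{\ell(v_i)}..\EP{\ell(v_i)}]$; this is exactly the representation of $V'$ assumed in the statement. Since the lexicographic order of substrings of $T$ coincides with the order of the left endpoints of their BWT intervals, sorting $V'$ by $\SP{\ell(v_i)}$ (breaking ties by length, shorter first) yields $L$. We store, in this order, the pairs $(\SP{\ell(v_i)}, \EP{\ell(v_i)})$ together with $|\ell(v_i)|$ in $O(k)$ words.

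First I would show how to test, for a query string $W$ given by its interval $[i..j] = \INTERVAL{W}$ in $\BWT_T$ and a candidate node $v_p$ in $L$, whether $\ell(v_p) \prec W$, $\ell(v_p) = W$, or $\ell(v_p) \succ W$ lexicographically, in $O(1)$ time. The interval of $W$ contains the interval of $\ell(v_p)$ iff $W$ is a prefix of $\ell(v_p)$; the two intervals are equal and $|\ell(v_p)| = |W|$ iff $\ell(v_p) = W$ (note that in a BWT equal intervals force equal strings only up to the right-maximality structure, so we also compare lengths — but in fact $W$ need not be a node label, so the clean statement is: $W = \ell(v_p)$ iff $[\SP{\ell(v_p)}..\EP{\ell(v_p)}] = [i..j]$ and $|\ell(v_p)| = |W|$, and more generally $W$ is a proper prefix of $\ell(v_p)$ iff $[\SP{\ell(v_p)}..\EP{\ell(v_p)}] \subseteq [i..j]$ with at least one containment strict or the lengths differing). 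When neither is a prefix of the other, their order is decided by the order of their BWT intervals, which are then disjoint: $\ell(v_p) \prec W$ iff $\EP{\ell(v_p)} < i$, and $\ell(v_p) \succ W$ iff $\SP{\ell(v_p)} > j$. All of these are comparisons of precomputed integers, hence $O(1)$.

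Given this comparator, $\INTERVALFUNCTION(W,V')$ is computed by two binary searches over $L$: one for the leftmost index $a$ with $W \preceq \ell(v_a)$ and "$W$ is a prefix of $\ell(v_a)$ or $W = \ell(v_a)$", and one for the rightmost such index $b$; the answer is $[a..b]$, or the empty interval if no element of $L$ has $W$ as a prefix. Each binary search makes $O(\log k)$ comparisons, each costing $O(1)$, for $O(\log k)$ total time, and the stored arrays occupy $O(k)$ words.

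The main obstacle is making the case analysis of the comparator watertight, in particular the prefix case: I must argue carefully that $W$ is a prefix of $\ell(v_p)$ exactly when $[\SP{\ell(v_p)}..\EP{\ell(v_p)}] \subseteq [i..j]$, using that the set of suffixes of $T$ having $W$ as a prefix is precisely the lexicographic range $[i..j]$, and that the suffixes prefixed by $\ell(v_p)$ form a sub-range, which is nonempty since $\ell(v_p)$ is a node label and hence occurs in $T$. The boundary subtlety — distinguishing $W$ being a proper prefix of $\ell(v_p)$ from $W = \ell(v_p)$ when the two intervals happen to coincide — is handled by the stored lengths $|\ell(v_p)|$, since if the intervals are equal then one string is a prefix of the other and the shorter one is determined by its length. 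Everything else is routine binary search.
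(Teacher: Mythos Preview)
Your approach is correct and is a genuinely different (and arguably more transparent) route than the paper's. The paper does not sort the nodes and binary-search among them with a lexicographic comparator; instead it marks, in two bitvectors $\mathtt{first}$ and $\mathtt{last}$ of length $n$, the left and right endpoints of the $k$ BWT intervals, stores these bitvectors as prefix-sum arrays together with multiplicity arrays $F$ and $L$, and then, given $[i..j]$, locates $i$ and $j$ in $\mathtt{first}$ and $\mathtt{last}$ by binary search and combines prefix sums to count how many stored intervals start inside $[i..j]$ and how many of those starting exactly at $i$ actually extend beyond $j$ (those are proper prefixes of $W$ and must be excluded). Both approaches use $O(k)$ words and $O(\log k)$ time; the paper's is closer to rank/select machinery, while yours exploits directly the laminar structure of BWT intervals.

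Two remarks on your write-up. First, your ``boundary subtlety'' is a non-issue: for the purpose of $\INTERVALFUNCTION$ you never need to distinguish $W=\ell(v_p)$ from $W$ being a proper prefix of $\ell(v_p)$, since both put $v_p$ inside the answer range; hence you do not need $|W|$ at all, matching the lemma's interface, which supplies only $[i..j]$. Second, the case you should argue explicitly (and which the paper's counting trick handles) is the opposite one: could $\ell(v_p)$ be a proper prefix of $W$ while the two BWT intervals coincide? It cannot, because $v_p$ is a suffix-tree node, so $\ell(v_p)$ is either right-maximal (then any proper extension has a strictly smaller interval) or a leaf label ending in $\#$ (then it has no extension in $T$). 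With that observation, your comparator reduces to pure interval containment/disjointness and is watertight without any length comparison.
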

\begin{proof}
We store a bitvector $\mathtt{first}[1..n]$ such that $\mathtt{first}[i]=1$ iff there is a node $v' \in V'$ such that $\INTERVAL{v'}=[i..j]$. Similarly, we store a bitvector $\mathtt{last}[1..n]$ such that $\mathtt{last}[j]=1$ iff there is a node $v' \in V'$ such that $\INTERVAL{v'}=[i..j]$. Let $\alpha$ and $\beta$ be the number of ones in $\mathtt{first}$ and $\mathtt{last}$, respectively. We build prefix-sum arrays $\mathtt{First}$ and $\mathtt{Last}$ on such bitvectors using $O(k)$ words of space, and we discard $\mathtt{first}$ and $\mathtt{last}$. Let $F[1..\alpha]$ be the array such that $F[i]$ equals the number of intervals $[p..q]$ such that $p$ is the $i$th one in $\mathtt{first}$ and $[p..q]=\INTERVAL{v'}$ for a node $v' \in V'$. Similarly, let $L[1..\beta]$ be the array such that $L[i]$ equals the number of intervals $[p..q]$ such that $q$ is the $i$th one in $\mathtt{last}$ and $[p..q]=\INTERVAL{v'}$ for a node $v' \in V'$. 
We represent $F$ and $L$ as prefix-sum arrays using $O(k)$ words of space, and we discard $F$ and $L$.

Let $\INTERVALFUNCTION(W,V')=[x..y]$. Given the interval $[i..j]$ of a string $W$ in $\BWT_T$, we find the corresponding interval $[i'..j']$ in array $\mathtt{first}$ in $O(\log \alpha)$ time, using binary search on $\mathtt{First}$. Specifically, $i'=\min\{p \in [1..\alpha] : \mathtt{First}[p] \geq i\}$ and $j'=\max\{q \in [1..\alpha] : \mathtt{First}[q] \leq j\}$. If $j'<i'$ then $W$ is not the prefix of a label of a node in $V'$. Otherwise, since all nodes $v' \in V'$ whose BWT interval starts inside $[i+1..j]$ are right extensions of $W$, we set $y=\sum_{p=1}^{j'}F[p]$ in constant time using the prefix-sum representation of $F$. If $\mathtt{First}[i'] \neq i$, i.e. if no interval of a node $v' \in V'$ starts at position $i$ in $\BWT_T$, then we can just set $x=1+\sum_{p=1}^{i'-1}F[p]$ and stop.

Otherwise, it could happen that just a (possibly empty) subset of all the nodes in $V'$ whose interval starts at position $i$ in $\BWT_T$ correspond to $W$ or to right extensions of $W$: the intervals of such nodes necessarily end inside $[i..j]$. All the other intervals that start at position $i$ could correspond instead to \emph{prefixes} of $W$, and they necessarily end after position $j$ in $\BWT_T$. Thus, let $[i''..j'']$ be the interval in $\mathtt{last}$ that corresponds to $[i..j]$: specifically, let $i''=\min\{p \in [1..\beta] : \mathtt{Last}[p] \geq i\}$ and $j''=\max\{q \in [1..\beta] : \mathtt{Last}[q] \leq j\}$. To determine the number of intervals that start at position $i$ in $\BWT_T$ and that correspond to prefixes of $W$, it suffices to compute the difference $\delta$ between the number of starting positions and the number of ending positions inside interval $[i..j]$, as follows: $\delta = \left( \sum_{p=1}^{j'}F[p] - \sum_{p=1}^{i'-1}F[p] \right) - \left( \sum_{q=1}^{j''}L[q] - \sum_{q=1}^{i''-1}L[q] \right)$. Then, $x = \sum_{p=1}^{i'}F[p] - \delta$. All such sums can be computed in constant time using the prefix-sum representations of $F$ ad $L$.
\end{proof}
\vspace{-0.4cm}

Consider now a factorization of $T$ such that all factors are right-maximal substrings of $T$, and let $V'$ be the set of nodes of $\ST_T$ that correspond to the distinct factors. To locate all the occurrences of a pattern that cross or end at a boundary between two factors, we just need an implementation of function $\INTERVALFUNCTION(W,V')$ and a pair of RLBWTs:


\begin{lemma}\label{lemma:straddlingOccurrences}
Let $T \in [1..\sigma]^{n-1}\#$ be a string, and let $T = T_1 T_2 \cdots T_z$ be a factorization of $T$ in which all factors are right-maximal substrings. There is a data structure that takes $O(z + \newr_T+\newrbar_T)$ 
words of space and that reports all the $\mathtt{occ}$ occurrences of a pattern $P \in [0..\sigma]^m$ that cross or end at a boundary between two factors of $T$, in $O(m(\log\log n + \log z) + \mathtt{occ}\log^{\epsilon}{z})$ time.
\end{lemma}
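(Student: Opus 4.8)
The plan is to mimic the standard LZ77-index reporting machinery, but to replace the explicit geometric data structure that stores lexicographic ranks of reversed prefixes and of factors with on-the-fly computations driven by two RLBWTs (one for $T$, one for $\REV{T}$) and by the $\INTERVALFUNCTION$ operation of Lemma~\ref{lemma:intervalOperation2}. An occurrence of $P$ that crosses or ends at a factor boundary splits as $P = P[1..t]\cdot P[t+1..m]$ for some split point $t \in [1..m]$, where $P[1..t]$ is a suffix of some prefix of $T$ ending at a factor boundary and $P[t+1..m]$ is a prefix of some factor $T_i$ (or, in the degenerate case $t=m$, the occurrence merely ends at a boundary). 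So, as in \cite{kreft2013compressing,karkkainen1996lempel}, I would fix each of the $m$ split points and count the pairs (reversed-prefix, factor) whose lexicographic ranges are compatible with $P[1..t]$ and $P[t+1..m]$ respectively, then feed those two ranges into a four-sided range-reporting structure on a $z\times z$ grid exactly as recalled at the end of Section~\ref{sec:stringIndexes}.

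First I would build, at construction time: $\RLCSA_T$ and $\RLCSA_{\REV{T}}$, taking $O(\newr_T+\newrbar_T)$ words; the set $V'$ of suffix-tree nodes of $\ST_T$ corresponding to the $z$ distinct factors (each stored as its $\BWT_T$-interval), together with the $\INTERVALFUNCTION(\cdot,V')$ structure of Lemma~\ref{lemma:intervalOperation2}, taking $O(z)$ words; symmetrically, the set $\REV{V'}$ of nodes of $\ST_{\REV{T}}$ corresponding to the distinct reversed prefixes that end at a factor boundary, with its own $\INTERVALFUNCTION(\cdot,\REV{V'})$ structure, again $O(z)$ words; and the four-sided range-reporting grid of \cite{chan2011orthogonal}, $O(z)$ words, where a point sits at $(x,y)$ iff the $x$th factor in lexicographic order is preceded in $T$ by the $y$th reversed boundary-prefix in lexicographic order. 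Total space is $O(z+\newr_T+\newrbar_T)$ as claimed.

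For the query, I would first run backward search for $P$ in $\RLCSA_T$, which in $O(m\log\log n)$ time yields, for every split point $t$, the $\BWT_T$-interval of $P[t+1..m]$ (these are the ranges produced incrementally while scanning $P$ right-to-left); feeding each such interval into $\INTERVALFUNCTION(P[t+1..m],V')$ costs $O(\log z)$ and returns the lexicographic range of factors having $P[t+1..m]$ as a prefix. Symmetrically, backward search for $\REV{P}$ in $\RLCSA_{\REV{T}}$ gives, in $O(m\log\log n)$ time, the $\BWT_{\REV{T}}$-interval of $\REV{P[1..t]}$ for every $t$, and $\INTERVALFUNCTION(\REV{P[1..t]},\REV{V'})$ turns it in $O(\log z)$ time into the lexicographic range of reversed boundary-prefixes having $\REV{P[1..t]}$ as a prefix. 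Then for each of the $m$ split points I issue one four-sided query on the grid with those two ranges, at cost $O((1+k_t)\log^{\epsilon} z)$ where $k_t$ is the number of boundaries reported for this split; each reported boundary is mapped back to a text position in $O(1)$ by storing the positions alongside the grid points. Summing over split points gives $O(m(\log\log n+\log z)+\mathtt{occ}\log^{\epsilon} z)$; a standard argument shows every crossing/ending occurrence is counted exactly once (its unique leftmost crossed boundary fixes $t$), so there is no double counting.

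The main obstacle is purely a matter of getting the correspondences right rather than any deep difficulty: I must check that the factors are right-maximal substrings (guaranteed by hypothesis, so $V'\subseteq V$ is well defined and Lemma~\ref{lemma:intervalOperation2} applies), and I must be careful about the boundary case $t=m$ and about which "reversed prefixes ending at a phrase boundary" to index — the natural choice is $\REV{T[1..p_i-1]}$ for every factor start $p_i$, so that a point $(x,y)$ encodes "factor $x$ is immediately preceded by boundary-prefix $y$", and a crossing occurrence at split $t$ corresponds exactly to a point inside the box (factors prefixed by $P[t+1..m]$) $\times$ (reversed boundary-prefixes prefixed by $\REV{P[1..t]}$). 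I would also note that reversed boundary-prefixes, viewed as strings in $\REV{T}$, are prefixes of suffixes of $\REV{T}$ ending at the mirrored positions, hence are labels (or prefixes of labels) of nodes of $\ST_{\REV{T}}$, which is what makes the second $\INTERVALFUNCTION$ instance legitimate — formally this needs the same closure-under-prefixes remark used earlier for left-maximal substrings. Once these bookkeeping points are nailed down, correctness and the time and space bounds follow immediately from Lemmas~\ref{lemma:intervalOperation2} and the cited range-reporting results.
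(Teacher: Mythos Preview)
Your proposal is correct and follows essentially the same architecture as the paper: backward searches in $\RLCSA_T$ and $\RLCSA_{\REV{T}}$, one $\INTERVALFUNCTION$ call per split point, and the four-sided grid of Section~\ref{sec:stringIndexes}. Two small points distinguish it from the paper's proof. First, the paper does \emph{not} build a second instance of Lemma~\ref{lemma:intervalOperation2} on $\ST_{\REV{T}}$: since each reversed boundary-prefix $\REV{T[1..p_i-1]}$ is a full suffix of $\REV{T}$ (hence a leaf, with a size-one interval), the paper simply marks those $z$ positions of $\BWT_{\REV{T}}$ in a predecessor structure $\mathtt{last}$ and uses two rank queries to translate the interval of $\REV{P[1..t]}$ into a rank range among boundary-prefixes, at cost $O(\log\log n)$ rather than $O(\log z)$. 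Your alternative is perfectly valid and stays within the stated bound; it just uses a heavier tool than necessary. Second, your no-double-counting sentence is inverted: the split that reports an occurrence is the one at its \emph{rightmost} covered boundary, not the leftmost. Your query ``factors having $P[t+1..m]$ as a prefix'' succeeds only when $P[t+1..m]$ lies entirely inside the next factor, which happens exactly at the rightmost boundary; the paper states this explicitly by requiring $P[k..m]$ to be a factor or a proper prefix of a factor. Fix that one word and the argument is complete.
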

\begin{proof}
Let $p_1,p_2,\dots,p_z$ be the sequence such that $p_i$ is the starting position of factor $T_i$ in $T$. The same occurrence of $P$ in $T$ can cover up to $m$ boundaries between two factors, thus we organize the computation as follows. We consider every possible way to place \emph{the rightmost boundary between two factors} in $P$, i.e. every possible split of $P$ into two parts $P[1..k-1]$ and $P[k..m]$ for $k \in [1..m]$, such that $P[k..m]$ is either a factor or a proper prefix of a factor. For every such $k$, we use four-sided range reporting queries to list all the occurrences of $P$ in $T$ that conform to this split, as described in Section \ref{sec:stringIndexes}. The four-sided range reporting data structure represents the mapping between the lexicographic rank of a factor $W$ among all the distinct factors of $T$, and the lexicographic ranks of all the reversed prefixes $\REV{T[1..p_i-1]}$ such that $T_i=W$, among all the reversed prefixes of $T$ that end at the last position of a factor. As described in Section \ref{sec:stringIndexes}, this data structure takes $O(z)$ words of space.

We encode sequence $p_1,p_2,\dots,p_z$ implicitly, as follows: we use a bitvector $\mathtt{last}[1..n]$ such that $\mathtt{last}[i]=1$ iff $\SA_{\REV{T}}[i]=n-p_j+2$ for some $j \in [1..z]$, i.e. iff $\SA_{\REV{T}}[i]$ is the last position of a factor. We represent such bitvector as a predecessor data structure with partial ranks, using $O(z)$ words of space \cite{Wi83}. Then, we build the data structure described in Lemma \ref{lemma:intervalOperation2}, where $V'$ is the set of loci in $\ST_{T}$ of all factors of $T$. This data structure takes $O(z)$ words of space, and together with $\mathtt{last}$, $\RLCSA_T$ and $\RLCSA_{\REV{T}}$, it is the output of our construction. 

Given a pattern $P \in [0..\sigma]^m$, we first perform a backward search in $\RLCSA_T$ to determine the number of occurrences of $P$ in $T$: if this number is zero, we stop. During this backward search, we store in a table the interval $[i_k .. j_k]$ of $P[k..m]$ in $\BWT_T$ for every $k \in [2..m]$. Then, we compute the interval $[i'_{k-1} .. j'_{k-1}]$ of $\REV{P[1..k-1]}$ in $\BWT_{\REV{T}}$ for every $k \in [2..m]$, using backward search in $\RLCSA_{\REV{T}}$: if $\mathtt{rank}_{1}(\mathtt{last},j'_{k-1}) - \mathtt{rank}_{1}(\mathtt{last},i'_{k-1}-1) = 0$, then $P[1..k-1]$ never ends at the last position of a factor, and we can discard this value of $k$. Otherwise, we convert $[i'_{k-1} .. j'_{k-1}]$ to the interval $[\mathtt{rank}_{1}(\mathtt{last},i'_{k-1})+1 .. \mathtt{rank}_{1}(\mathtt{last},j'_{k-1})]$ of all the reversed prefixes of $T$ that end at the last position of a factor. Rank operations on $\mathtt{last}$ can be implemented in $O(\log{\log{n}})$ time using predecessor queries. We get the lexicographic interval of $P[k..m]$ in the list of all the distinct factors of $T$ using operation $\INTERVALFUNCTION(P[k..m],V')$, in $O(\log z)$ time. We use such intervals to query the four-sided range reporting data structure.
\end{proof}

The algorithm described in Lemma \ref{lemma:straddlingOccurrences} can be engineered in a number of ways in practice. Here we just apply it to the LZ factorization of $T$ to find all the primary occurrences of $P$ in $T$, and we use the strategy described in Section \ref{sec:stringIndexes} to compute secondary occurrences, obtaining the key result of this section:


\begin{theorem}
Let $T \in [1..\sigma]^{n-1}\#$ be a string, and let $T = T_1 T_2 \dots T_z$ be its LZ factorization. There is a data structure that takes $O(z+\newr_T+\newrbar_T)$ 
words of space and that reports all the $\mathtt{pocc}$ primary occurrences and all the $\mathtt{socc}$ secondary occurrences of a pattern $P \in [0..\sigma]^m$ in $O(m(\log\log n + \log z) + \mathtt{pocc}\log^{\epsilon}{z} + \mathtt{socc}\log{\log{n}})$ time.
\end{theorem}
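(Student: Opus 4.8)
The plan is to assemble the final theorem directly from the two building blocks that precede it, with essentially no new work. First I would invoke Lemma~\ref{lemma:straddlingOccurrences} with the factorization $T = T_1 T_2 \cdots T_z$ taken to be the LZ77 factorization of $T$. To justify this application I need to observe that every LZ77 factor is a right-maximal substring of $T$: this is immediate from the greedy longest-match rule, since if a factor $T_i$ could be extended by a single character and still occur earlier, the parsing would have taken the longer factor, so $T_i$ has at least two distinct right extensions in the circular text (counting the character actually following its occurrence inside $T$ versus the characters following its earlier occurrences that forced the cut). Hence Lemma~\ref{lemma:straddlingOccurrences} applies verbatim and gives, in $O(z + \newr_T + \newrbar_T)$ words, a structure reporting all occurrences of $P$ that cross or end at an LZ phrase boundary in $O(m(\log\log n + \log z) + \mathtt{occ}\log^{\epsilon} z)$ time. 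By the definition recalled in Section~\ref{sec:stringIndexes}, these are exactly the $\mathtt{pocc}$ primary occurrences, so this handles the primary part of both the space and the time bound.

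Next I would add the standard LZ77 secondary-occurrence machinery described at the end of Section~\ref{sec:stringIndexes}: the two-sided range reporting structure of \cite{karkkainen1996lempel} over the $z$ phrases, which occupies $O(z)$ words and, once all primary occurrences are known, produces every secondary occurrence in $O(\mathtt{socc}\log\log n)$ additional time. Concretely, after the query of Lemma~\ref{lemma:straddlingOccurrences} returns the list of primary occurrence positions, I feed each of them into the two-sided structure, which walks the "copy-source" relation among phrases and emits the induced secondary occurrences; the classic argument (each secondary occurrence is derived from exactly one primary occurrence lying inside the source region of its phrase) guarantees that the total number of reporting steps is $O(\mathtt{pocc} + \mathtt{socc})$, each costing $O(\log\log n)$. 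Summing the space: $O(z)$ for the four-sided grid, $O(z)$ for $\mathtt{last}$ and the $\INTERVALFUNCTION$ structure, $O(z)$ for the two-sided structure, and $O(\newr_T + \newrbar_T)$ for $\RLCSA_T$ and $\RLCSA_{\REV{T}}$, which is $O(z + \newr_T + \newrbar_T)$ words overall.

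Finally I would total the running time. The backward searches in $\RLCSA_T$ and $\RLCSA_{\REV{T}}$ and the predecessor/rank operations on $\mathtt{last}$ contribute $O(m\log\log n)$; the $m$ calls to $\INTERVALFUNCTION$ contribute $O(m\log z)$; the four-sided queries over all $m$ splits report the $\mathtt{pocc}$ primary occurrences in $O(m\log^{\epsilon} z + \mathtt{pocc}\log^{\epsilon} z)$, which is absorbed into $O(m\log z + \mathtt{pocc}\log^{\epsilon} z)$; and the secondary phase adds $O(\mathtt{socc}\log\log n)$. Combining, the total is $O(m(\log\log n + \log z) + \mathtt{pocc}\log^{\epsilon} z + \mathtt{socc}\log\log n)$, matching the claim. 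I do not foresee a genuine obstacle here: the theorem is essentially a corollary of Lemma~\ref{lemma:straddlingOccurrences} plus the cited LZ77 toolbox. The only point demanding a sentence of care is the verification that LZ77 factors are right-maximal so that Lemma~\ref{lemma:straddlingOccurrences} is applicable, and the routine bookkeeping that the union of the three $O(z)$-space components and the two RLBWTs stays within the stated bound; neither is hard, but both should be stated explicitly rather than left implicit.
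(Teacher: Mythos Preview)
Your proposal is correct and matches the paper's own argument essentially line for line: the paper simply states that one applies Lemma~\ref{lemma:straddlingOccurrences} to the LZ factorization to obtain the primary occurrences and then invokes the two-sided range-reporting machinery of Section~\ref{sec:stringIndexes} for the secondary ones, without even writing out the bookkeeping you spell out. The only detail you add beyond the paper is the explicit justification that LZ77 factors are right-maximal, which the paper asserts elsewhere (in the proof of Theorem~2) but does not repeat here.
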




\section{Combining runs in BWT and maximal repeats}\label{sec:cdawg}

An alternative way to compute all the occurrences of a pattern in a string $T$ consists in combining $\RLCSA_T$ with $\CDAWG_T$, using an amount of space proportional to the number of right extensions of the maximal repeats of $T$:


\begin{theorem}\label{thm:locate_dawg}
Let $T \in [1..\sigma]^{n-1}\#$ be a string. There is a data structure that takes $O(\newe_T)$ 
words of space (or alternatively, $O(\newel_T)$ 
words of space) and that reports all the $\mathtt{occ}$ occurrences of a pattern $P \in [0..\sigma]^m$ in $O(m\log{\log{n}} + \mathtt{occ})$ time.
\end{theorem}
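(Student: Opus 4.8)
The plan is to combine $\RLCSA_T$, used to count the occurrences by backward search, with a compact representation of $\CDAWG_T$, used to report them. The space bound comes essentially for free from the material above: by Theorem~\ref{thm:numberOfRuns} we have $\newr_T\le|\mathcal{F}^r_T|\le\newe_T$, so $\RLCSA_T$ takes $O(\newe_T)$ words; and $\CDAWG_T$ has exactly $\newe_T$ arcs and at most $\newe_T/2+1$ nodes, since every non-sink node is a maximal repeat, hence is right-maximal, hence has at least two outgoing arcs. In addition to the CDAWG topology I would store, for every arc, the length of its label (not the label itself), for every node its string depth, and, in a predecessor structure, the $\BWT_T$-interval of the equivalence class of every node; this is still $O(\newe_T)$ words in total, and together with $\RLCSA_T$ it is the whole data structure. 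The alternative $O(\newel_T)$-space version is the mirror construction on $\REV{T}$: it uses $\RLCSA_{\REV{T}}$, of size $O(\newrbar_T)=O(\newel_T)$ by Theorem~\ref{thm:numberOfRuns} applied to $\REV{T}$, and $\CDAWG_{\REV{T}}$, of size $O(\newel_T)$, and it maps occurrences in $\REV{T}$ back to $T$.

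Given $P\in[0..\sigma]^m$, the first phase performs a backward search in $\RLCSA_T$: in $O(m\log\log n)$ time it returns the interval $[\SP{P}..\EP{P}]$ of $P$ in $\BWT_T$; if the interval is empty we stop, and otherwise $\mathtt{occ}=\EP{P}-\SP{P}+1$. The second phase uses this interval to locate the state of $\CDAWG_T$ reached by reading $P$ from the source --- either a node $v$, or a point in the interior of an arc entering a node $v$ --- together with the number $d\ge 0$ of characters that still separate $P$ from $v$ along that arc. I would recover $v$ and $d$ from $[\SP{P}..\EP{P}]$ and $m$ with a constant number of predecessor queries over the stored per-node intervals and string depths, in $O(\log\log n)$ time. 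Making this translation precise --- that is, spelling out the exact dictionary between lexicographic intervals in $\BWT_T$ and equivalence classes of $\CDAWG_T$, and the accompanying off-by-$d$ arithmetic --- is the step that requires the most care, and I expect it to be the main obstacle of the proof.

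The third phase reports the occurrences. Since $\CDAWG_T$ represents the suffixes of $T$, every suffix of $T$ that has $P$ as a prefix --- equivalently, every occurrence of $P$ in $T$ --- arises by completing the reading of $P$ first with the $d$ remaining characters of the current arc and then with a directed path $\pi$ from $v$ to the sink; the corresponding occurrence of $P$ ends at text position $n-d-L_\pi$, where $L_\pi$ is the sum of the arc-label lengths along $\pi$. Thus the occurrences of $P$ are in bijection with the $v$-to-sink paths, and I would enumerate them by a depth-first traversal starting at $v$ that keeps the running value of $L_\pi$ using the stored arc lengths --- so that traversing an arc costs $O(1)$ regardless of the length of its label --- and that outputs $n-d-L_\pi$ whenever it reaches the sink. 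The traversal explores the tree $\mathcal{T}$ whose leaves are the $v$-to-sink paths; because every non-sink node of $\CDAWG_T$ is a maximal repeat and so has at least two outgoing arcs, every internal node of $\mathcal{T}$ has at least two children, hence $|\mathcal{T}|<2\,\mathtt{occ}$ and this phase runs in $O(\mathtt{occ})$ time. Adding the three phases gives the claimed $O(m\log\log n+\mathtt{occ})$ bound. The remaining details I would still verify are the degenerate cases in which reading $P$ reaches the sink directly or $P$ is a prefix of $T$ (so that $v$ is forced to be the sink and $\mathtt{occ}=1$, consistently with $\mathcal{T}$ being a single node), and the exact constants in the position formula and in the interval-to-state translation.
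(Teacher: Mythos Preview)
Your overall architecture---$\RLCSA_T$ for counting, $\CDAWG_T$ for reporting, with the space bound via Theorem~\ref{thm:numberOfRuns}---is exactly the paper's, and your Phase~3 DFS is correct (the paper computes positions a bit differently, storing an absolute text position $\gamma.\mathtt{pos}$ on each arc to the sink and maintaining an offset $j$ of $P$ inside the current maximal repeat rather than accumulating $L_\pi$, but both schemes work).

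The gap is Phase~2, which you correctly flag as the main obstacle. Your plan is to jump from $[\SP{P}..\EP{P}]$ to the CDAWG node $v$ via a constant number of predecessor queries over the stored intervals of maximal repeats. This does not obviously work: the right-maximal extension $W'$ of $P$ has interval $[\SP{P}..\EP{P}]$, but $W'$ is in general only a \emph{proper suffix} of the maximal repeat $W=\ell(v)$ that heads its equivalence class, and by Property~\ref{obs:equivalenceClassInBWT} the intervals of the members of an equivalence class have equal length but sit at different positions in $\BWT_T$. Hence $[\SP{P}..\EP{P}]$ need not equal, contain, or be contained in the stored interval of $W$, and with only the $O(\newe_T)$ maximal-repeat intervals available there is no evident way to recover $W$ from it by predecessor search.

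The paper sidesteps this obstacle entirely with a \emph{blind search} in the CDAWG, in the style of Patricia trees: for every arc $\gamma$ store only its first character $\gamma.\mathtt{char}$ and its right-extension length $\gamma.\mathtt{right}$; starting at the source, repeatedly take the unique outgoing arc with $\gamma.\mathtt{char}=P[i+1]$ (constant time by hashing) and advance $i$ by $\gamma.\mathtt{right}$, never inspecting the rest of the label. Because Phase~1 has already certified that $P$ occurs in $T$, the blind search cannot go astray, and it reaches the locus after at most $m$ arcs, i.e.\ in $O(m)$ time, which is absorbed into the $O(m\log\log n)$ of Phase~1. This is the missing idea; once you have it, your Phase~3 goes through unchanged.
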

\begin{proof}
We build $\RLCSA_T$ and $\CDAWG_T$. For every node $v$ in the CDAWG, we store $|\ell(v)|$ in a variable $v.\mathtt{length}$. Recall that an arc $(v,w)$ of the CDAWG means that maximal repeat $\ell(w)$ can be obtained by extending maximal repeat $\ell(v)$ to the right and to the left. Thus, for every arc $\gamma=(v,w)$ of $\CDAWG_T$, we store the first character of $\ell(\gamma)$ in a variable $\gamma.\mathtt{char}$, and we store the length of the right extension implied by $\gamma$ in a variable $\gamma.\mathtt{right}$. The length $\gamma.\mathtt{left}$ of the left extension implied by $\gamma$ can be computed by $w.\mathtt{length}-v.\mathtt{length}-\gamma.\mathtt{right}$. Clearly arcs of $\CDAWG_T$ that correspond to edges of $\ST_T$ in set $\mathcal{E}^{r}_T$ induce no left extension. For every arc of $\CDAWG_T$ that connects a maximal repeat $W$ to the sink, we store just $\gamma.\mathtt{char}$ and the starting position $\gamma.\mathtt{pos}$ of string $W \cdot \gamma.\mathtt{char}$ in $T$. The total space used by the CDAWG is clearly $O(\newe)$ words, and by Theorem \ref{thm:numberOfRuns} the space used by $\RLCSA_T$ is $O(|\mathcal{F}^{r}_T|)$ words. An alternative construction could use $\CDAWG_{\REV{T}}$ and $\RLCSA_{\REV{T}}$.

We use the RLBWT to count the number of occurrences of $P$ in $T$ in $O(m\log{\log{n}})$ time: if this number is greater than zero, we use the CDAWG to report all the $\mathtt{occ}$ occurrences of $P$ in $T$ in $O(\mathtt{occ})$ time, using the technique sketched in \cite{crochemore1997automata}. Specifically, since we know that $P$ occurs in $T$, we perform a blind search for $P$ in the CDAWG, as is typically done with Patricia trees. We keep a variable $i$, initialized to zero, that stores the length of the prefix of $P$ that we have matched so far, and we keep a variable $j$, initialized to one, that stores the starting position of $P$ inside the last maximal repeat encountered during the search. For every node $v$ in the CDAWG, we choose the arc $\gamma$ such that $\gamma.\mathtt{char}=P[i+1]$ in constant time using hashing, we increment $i$ by $\gamma.\mathtt{right}$, and we increment $j$ by $\gamma.\mathtt{left}$. If the search leads to the sink by an arc $\gamma$, we report $\gamma.\mathtt{pos}+j$ and we stop. If the search leads to a node $v$ that is associated with the maximal repeat $W$, we determine all the occurrences of $W$ in $T$ by performing a depth-first traversal of all the nodes in the CDAWG that are reachable from $v$, updating variables $i$ and $j$ as described above, and reporting $\gamma.\mathtt{pos}+j$ for every arc $\gamma$ that leads to the sink. The total number of nodes and arcs reachable from $v$ is clearly $O(\mathtt{occ})$.
\end{proof}

The combination of $\CDAWG_T$ and $\RLCSA_T$ can also be used to implement a repetition-aware representation of $\ST_T$. We will apply the following property to support operations on $\ST_T$:

\begin{property}\label{obs:interval}
A maximal repeat $W=[1..\sigma]^m$ of $T$ is the equivalence class of all the right-maximal strings $\{W[1..m],\dots,W[k..m]\}$ such that $W[k+1..m]$ is left-maximal, and $W[i..m]$ is not left-maximal for all $i \in [2..k]$. Equivalently, the node $v'$ of $\CDAWG_T$ with $\ell(v')=W$ is the equivalence class of the nodes $\{v_1,\dots,v_k\}$ of $\ST_T$ such that $\ell(v_i)=W[i..m]$ for all $i \in [1..k]$, and such that $v_k,v_{k-1},\dots,v_1$ is a maximal unary path of Weiner links.
\end{property}

Thus, the set of right-maximal strings that belong to the equivalence class of a maximal repeat can be represented by a single integer $k$, and a right-maximal string can be identified by the maximal repeat $W$ it belongs to, and by the length of the corresponding suffix of $W$. In $\BWT_T$, the right-maximal strings in the same equivalence class enjoy the following additional properties:

\begin{property}\label{obs:equivalenceClassInBWT}
Let $\{W[1..m],\dots,W[k..m]\}$ be the right-maximal strings that belong to the equivalence class of maximal repeat $W \in [1..\sigma]^m$, and let $\INTERVAL{W[i..m]}=[p_i..q_i]$ for $i \in [1..k]$. Then:
\begin{enumerate}
\item $|q_i-p_i+1|=|q_j-p_j+1|$ for all $i$ and $j$ in $[1..k]$.
\item $\BWT_{T}[p_i..q_i]=W[i-1]^{q_i-p_i+1}$ for $i \in [2..k]$. Conversely, $\BWT_{T}[p_1..q_1]$ contains at least two distinct characters.
\item $p_{i-1} = C[c]+\mathtt{rank}_{c}(\BWT_T,p_i)$ and $q_{i-1}=p_{i-1}+q_i-p_i$ for $i \in [2..k]$, where $c=W[i-1]=\BWT_{T}[p_i]$.\label{point:weinerLink}
\item $p_{i+1} = \mathtt{select}_{c}(\BWT_T,p_i-C[c])$ and $q_{i+1} = p_{i+1}+q_i-p_i$ for $i \in [1..k-1]$, where $c=W[i]$ is the character that satisfies $C[c] < p_i \leq C[c+1]$. This can be computed in $O(\log{\log{n}})$ time using a predecessor data structure that uses $O(\sigma)$ words of space \cite{Wi83}.\label{point:suffixLink}
\item Let $c \in [0..\sigma]$, and let $\INTERVAL{W[i..m]c}=[x_i..y_i]$ for $i \in [1..k]$. Then, $x_i = p_i+x_1-p_1$ and $y_i = p_i+y_1-p_1$.\label{point:child}
\end{enumerate}
\end{property}

The final property we will exploit relates the equivalence class of a maximal repeat to the equivalence classes of its in-neighbors in the CDAWG:

\begin{property}\label{obs:inNeighbors}
Let $w$ be a node in $\CDAWG_T$ with $\ell(w) = W \in [1..\sigma]^m$, and let $\mathcal{S}_w=\{W[1..m]$, $\dots$, $W[k..m]\}$ be the right-maximal strings that belong to the equivalence class of node $w$. Let $\{v^1,\dots,v^t\}$ be the in-neighbors of $w$ in $\CDAWG_T$, and let $\{V^1,\dots,V^t\}$ be their labels. Then, $\mathcal{S}_w$ is partitioned into $t$ disjoint sets $\mathcal{S}_w^1,\dots,\mathcal{S}_w^t$ such that $\mathcal{S}_w^i = \{W[x^i+1..m],W[x^i+2..m],\dots,W[x^i+|\mathcal{S}_{v^i}|..m]\}$, and the right-maximal string $V^{i}[p..|V^i|]$ labels the parent of the locus of the right-maximal string $W[x^i+p..m]$ in $\ST_T$.
\end{property}
\begin{proof}
It is clear that the parent in $\ST_T$ of every right-maximal string in the equivalence class of node $w$ belongs to the equivalence class of an in-neighbor of $w$: we focus here just on showing that the in-neighbors of $w$ induce a partition on the equivalence class of $w$. Assume that the character that labels arc $\gamma=(v^i,w)$ in the CDAWG is $c$. Since arc $\gamma$ exists, we can factorize $W$ as $X^i V^i Y^i$, where $Y^{i}[1]=c$, and we know that no prefix of $V^i Y^i$ longer than $V^i$ is right-maximal, and that no suffix of $W$ longer than $|V^i Y^i|$ is left-maximal. Consider any suffix $V^{i}[p..|V^i|]$ of $V^i$ that belongs to the equivalence class of $V^i$: if $p>1$, then $W[|X^i|+p..m]$ is not left-maximal, thus $W[|X^i|+p..m]$ belongs to the equivalence class of $W$. Its prefix $V^{i}[p..|V^i|]$ is right-maximal, and no longer prefix is right-maximal. Indeed, assume that string $V^{i}[p..|V^i|]Z^i$ is right-maximal for some prefix $Z^i$ of $Y^i$. Since $V^{i}[p..|V^i|]$ is not left-maximal, then string $V^{i}[p..|V^i|]Z^i$ is not left-maximal either, and this implies that $V^i Z^i$ is right-maximal, contradicting the hypothesis. Thus, string $V^{i}[p..|V^i|]$ labels the parent of the locus of string $W[|X^i|+p..m]$ in $\ST_T$. If $p=1$ and $V^i Y^i$ is not left-maximal, the same argument applies. If $V^i Y^i$ is left-maximal, then $W=V^i Y^i$, and since no right-maximal prefix of $W$ longer than $V^i$ exists, we have that $V^i$ labels the parent of the locus of $W$ in $\ST_T$.
\end{proof}

Combining Properties \ref{obs:interval}, \ref{obs:equivalenceClassInBWT} and \ref{obs:inNeighbors}, we obtain the following result:

\begin{table}[t!]
\begin{center}
\begin{tabular}{|l|c|c|c|c|c|c|c|c|c|}
\hline
 & $\mathtt{stringDepth}$ & $\mathtt{isAncestor}$ & $\mathtt{parent}$      & $\mathtt{child}$      & $\mathtt{suffixLink}$ & $\mathtt{weinerLink}$ & $\mathtt{edgeChar}$ & $\mathtt{nLeaves}$ \\
 & $\mathtt{locateLeaf}$  &                       & $\mathtt{nextSibling}$ & $\mathtt{firstChild}$ &                       &                       &                     &                        \\
\hline
\hline
1 & $O(1)$                 & $O(1)$                & $O(\log{\log{n}})$     & $O(1)$		   & $O(\log{\log{n}})$    & $O(\log{\log{n}})$    & $O(\log{\log{n}})$  & $O(1)$                 \\
\hline
2 & $O(1)$                 &                       & $O(\log{\log{n}})$     & $O(1)$                & $O(1)$                &                       &                     &                        \\
\hline
\end{tabular}
\caption{Time complexities of two representations of $\ST_T$: with intervals in $\BWT_T$ (row 1) and without intervals in $\BWT_T$ (row 2).}
\label{tab:suffixTree}
\end{center}
\vspace{-1cm}
\end{table}

\begin{theorem}\label{thm:suffixTreeRepresentation}
Let $T \in [1..\sigma]^{n-1}\#$ be a string. There are two implementations of $\ST_T$ that take $O(\newe_T + \newel_T)$ 
words of space each, and that support the operations in Table \ref{tab:suffixTree} with the specified time complexities.
\end{theorem}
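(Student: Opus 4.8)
The plan is to represent each node of $\ST_T$ by a pair: a pointer to the CDAWG node $w$ whose equivalence class contains it, together with the index $i \in [1..k]$ identifying which suffix $W[i..m]$ of the maximal repeat $W=\ell(w)$ we mean (Property \ref{obs:interval}). In the first representation (row 1 of Table \ref{tab:suffixTree}) we additionally cache the BWT interval $\INTERVAL{W[i..m]}=[p_i..q_i]$; in the second representation we do not. The core observation, from Property \ref{obs:equivalenceClassInBWT}, is that all $[p_i..q_i]$ in one class share the same width, and that moving from index $i$ to $i-1$ or $i+1$ is a single $\mathtt{LF}$/$\mathtt{select}$ step, computable in $O(\log\log n)$ time in $\RLCSA_T$; so the interval, when needed, can always be recovered from $w$, $i$, and the stored interval of $W$ itself. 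I would store with each CDAWG node $w$: $w.\mathtt{length}=|\ell(w)|$, the value $k$ (size of its equivalence class), the interval $\INTERVAL{\ell(w)}$, the sorted list of its out-arcs indexed by first character (hashed), and — to realize Property \ref{obs:inNeighbors} — the list of in-arcs together with, for each in-neighbor $v^i$, the offset $x^i$ that locates the block $\mathcal{S}_w^i$ inside $\{1,\dots,k\}$. All of this is $O(\newe_T+\newel_T)$ words since the arc sets of $\CDAWG_T$ and $\CDAWG_{\REV{T}}$ have these sizes.

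Then I would go through the operations one by one. \textbf{$\mathtt{stringDepth}$} is immediate: $|\ell(v)| = w.\mathtt{length} - (i-1)$. \textbf{$\mathtt{locateLeaf}$/$\mathtt{nLeaves}$}: a leaf is an arc into the sink, carrying $\gamma.\mathtt{pos}$; $\mathtt{nLeaves}$ of an internal node is just the width $q_1-p_1+1$ of its class interval, read off in $O(1)$. \textbf{$\mathtt{isAncestor}(u,v)$}: $u$ is an ancestor of $v$ iff $\ell(u)$ is a prefix of $\ell(v)$ and $\ell(v)$ is a prefix of some leaf below; with BWT intervals this is the containment test $[\INTERVAL{v}] \subseteq [\INTERVAL{u}]$ in $O(1)$ (interval of a node and of its subtree's leaves coincide), which is why row 2 drops $\mathtt{isAncestor}$. \textbf{$\mathtt{child}$/$\mathtt{firstChild}$/$\mathtt{edgeChar}$}: children of the right-maximal string $W[i..m]$ correspond to out-arcs of $w$ (and of intermediate CDAWG nodes) whose induced right-extension, when applied starting from offset $i$, reaches past length $|W[i..m]|$; by Property \ref{obs:equivalenceClassInBWT}.\ref{point:child} the child's interval is obtained from $[p_i..q_i]$ by the same shift as for $W[1..m]$, so a single $\gamma.\mathtt{char}$ lookup in the hash gives the child in $O(1)$, and $\mathtt{edgeChar}$ for the first character is $\gamma.\mathtt{char}$; a general edge character needs one backward-search step, $O(\log\log n)$. \textbf{$\mathtt{suffixLink}$/$\mathtt{weinerLink}$}: within a class these are just $i \mapsto i+1$ and $i \mapsto i-1$ (Property \ref{obs:interval}), which with cached intervals is one $\mathtt{select}$/$\mathtt{LF}$ step ($O(\log\log n)$, or $O(1)$ if we additionally precompute the within-class step, giving row 2's $O(1)$ suffix link); when $i$ would leave the class ($i=k$ for suffix link, $i=1$ for Weiner link), we jump to another CDAWG node — for the suffix link, $W[k..m]$ is left-maximal so $W[k+1..m]$ is the label of a distinct CDAWG node, found via... here is the subtlety.

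\textbf{$\mathtt{parent}$ and $\mathtt{nextSibling}$ are the main obstacle, and they are exactly what Property \ref{obs:inNeighbors} is for.} The parent of the locus of $W[i..m]$ in $\ST_T$ is, by that property, the right-maximal string $V^j[p..|V^j|]$ where $v^j$ is the in-neighbor of $w$ whose block $\mathcal{S}_w^j = \{W[x^j+1..m],\dots,W[x^j+|\mathcal{S}_{v^j}|..m]\}$ contains index $i$, and $p = i - x^j$; so we find the correct in-arc of $w$ by a predecessor search over the stored offsets $x^1 < x^2 < \cdots$ (hence $O(\log\log n)$, matching Table \ref{tab:suffixTree}), then return the pair $(v^j, p)$. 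For row 2 we are done; for row 1 we must also produce $\INTERVAL{\ell(\mathtt{parent})}$, which by Property \ref{obs:equivalenceClassInBWT}.\ref{point:child} is the shift of $\INTERVAL{V^j[p..|V^j|]}$ that we get from the current node's interval — reading it off requires knowing $V^j$'s class interval (stored) and the offset $p$, again $O(\log\log n)$. $\mathtt{nextSibling}$ follows: the siblings of $(w,i)$ are the other children of its parent $(v^j,p)$, enumerated through the out-arcs of $v^j$ (and descendants) from the offset-$p$ viewpoint; ordering them by first character and by interval position, and advancing to the next one, costs $O(\log\log n)$. The delicate points to verify are (i) that the offsets $x^j$ do partition $\{1,\dots,k\}$ with no gaps or overlaps — this is the partition claim already proved in Property \ref{obs:inNeighbors}; (ii) that the "intermediate" CDAWG nodes traversed inside a single $\ST_T$ edge contribute nothing to the child/parent count, because an edge of $\CDAWG_T$ may compress several $\ST_T$ edges, so $\mathtt{child}$ must check the length $\gamma.\mathtt{right}$ against the remaining pattern length and, if it overshoots, report that the target is in the middle of a CDAWG arc — but since nodes of $\ST_T$ are right-maximal strings and every right-maximal string is either a leaf, the sink's preimage, or inside some maximal-repeat class, the only $\ST_T$ edges are those in $\mathcal{E}^r_T \cup \mathcal{F}^r_T$, so this does not actually arise and $\mathtt{child}$ lands on a genuine node; (iii) consistency of the two coordinates — $(w,i)$ with $i \in [1..k_w]$ — under all these moves, which is a routine but necessary induction using Properties \ref{obs:interval} and \ref{obs:inNeighbors}. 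I would present these verifications compactly and tabulate the resulting time bounds to match Table \ref{tab:suffixTree}.
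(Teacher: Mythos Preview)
Your approach coincides with the paper's: identify an $\ST_T$ node by its CDAWG class plus a within-class position (the paper uses $|\ell(v)|$ rather than your index $i$; these are interconvertible in $O(1)$), optionally cache the BWT interval, navigate within a class via Property~\ref{obs:equivalenceClassInBWT}, take $\mathtt{child}$ by following a CDAWG out-arc, and compute $\mathtt{parent}$ via a predecessor query over the in-neighbor offsets supplied by Property~\ref{obs:inNeighbors}. Two pieces are genuinely missing, however. You leave the cross-class case of $\mathtt{suffixLink}$ unfinished (you literally trail off at ``found via\dots\ here is the subtlety'') and never treat the analogous boundary $i=1$ for $\mathtt{weinerLink}$. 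The paper closes this by storing, on every CDAWG node $v'$, an explicit pointer to the CDAWG node $w'$ whose label is $W[k+1..m]$ (and the reverse Weiner links); this is one pointer per maximal repeat, $O(|\mathcal{M}_T|)\subseteq O(\newe_T)$ words. The target $W[k+1..m]$ is indeed a maximal repeat: it is left-maximal by the definition of the class (Property~\ref{obs:interval}) and right-maximal because $W[k..m]$ is, so it equals $\ell(w')$ and the new within-class index is $1$.

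Separately, ``one backward-search step'' does not implement $\mathtt{edgeChar}$: backward search in $\BWT_T$ extends to the left, whereas edge labels extend to the right. The paper reads edge characters by additionally storing $\RLCSA_{\REV{T}}$ and, for each CDAWG node, the interval of the reversed maximal repeat in $\BWT_{\REV{T}}$; successive characters of an edge label are then backward steps in $\RLCSA_{\REV{T}}$. This is where the $\newel_T$ term in the space bound actually originates (via $|\mathcal{R}_{\REV{T}}|\le|\mathcal{F}^\ell_T|$ from Theorem~\ref{thm:numberOfRuns}), not from $\CDAWG_{\REV{T}}$, which neither you nor the paper ever use. Finally, your concern~(ii) is a non-issue: every right-maximal substring of $T$ lies in the equivalence class of some CDAWG node, so following a single out-arc from $v'$ always lands on a genuine $\ST_T$ node in the target class, never mid-edge.
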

\begin{proof}
We build $\RLCSA_T$ and $\CDAWG_T$, and we annotate the latter as described in Theorem \ref{thm:locate_dawg}, with the only difference that arcs that connect a maximal repeat to the sink are annotated with character and length like all other arcs. We store in every node $v$ of the CDAWG the number $v.\mathtt{size}$ of right-maximal strings that belong to its equivalence class, the interval $[v.\mathtt{first}..v.\mathtt{last}]$ of $\ell(v)$ in $\BWT_T$, a linear-space predecessor data structure \cite{Wi83} on the boundaries induced on the equivalence class of $v$ by its in-neighbors (see Observation \ref{obs:inNeighbors}), and pointers to the in-neighbor that corresponds to the interval associated with each boundary. Finally, we add to the CDAWG all suffix links $(v,w)$ from $\ST_T$ such that both $v$ and $w$ are maximal repeats, and the corresponding explicit Weiner links.

We represent a node $v$ of $\ST_T$ as a tuple $\mathtt{id}(v)=(v',|\ell(v)|,i,j)$, where $v'$ is the node in $\CDAWG_T$ that corresponds to the equivalence class of $v$, and $[i..j]$ is the interval of $\ell(v)$ in $\BWT_T$. Thus, operation $\mathtt{stringDepth}$ can be implemented in constant time, and if $v$ is a leaf, the second component of $\mathtt{id}(v)$ is its starting position in $T$. Operation $\mathtt{isAncestor}$ can be implemented by testing the containment of the corresponding intervals in $\BWT_T$. To implement operation $\mathtt{suffixLink}$, we first check whether $|\ell(v)|=v'.\mathtt{length}-v'.\mathtt{size}+1$: if so, we take the suffix link $(v',w')$ from $v'$ and we return $(w',w'.\mathtt{length},w'.\mathtt{first},w'.\mathtt{last})$. Otherwise, we return $(v',|\ell(v)|-1,i',j')$, where $[i'..j']$ is computed as described in point \ref{point:suffixLink} of Property \ref{obs:equivalenceClassInBWT}. To implement $\mathtt{weinerLink}$ for some character $c$, we first check whether $|\ell(v)|=v'.\mathtt{length}$: if so, we take the Weiner link $(v',w')$ from $v'$ labeled by character $c$ (if any), and we return $(w',w'.\mathtt{length}-w'.\mathtt{size}+1,i',j')$, where $[i'..j']$ is computed by taking a backward step with character $c$ from $[v'.\mathtt{first}..v'.\mathtt{last}]$. Otherwise, we check whether $\BWT_{T}[i]=c$: if so, we return $(v',|\ell(v)|+1,i',j')$, where $[i'..j']$ is computed as described in point \ref{point:weinerLink} of Property \ref{obs:equivalenceClassInBWT}.

To implement $\mathtt{child}$ for some character $c$, we follow the arc $\gamma=(v',w')$ in the CDAWG labeled by $c$ (see Observation \ref{obs:inNeighbors}), and we return tuple $(w',|\ell(v)|+\gamma.\mathtt{right},i',j')$, where $[i'..j']$ is computed as described in point \ref{point:child} of Property \ref{obs:equivalenceClassInBWT}. To implement $\mathtt{parent}$ we exploit Property \ref{obs:equivalenceClassInBWT}, i.e. we determine the partition of the equivalence class of $v'$ that contains $v$ by searching the predecessor of value $|\ell(v)|$ in the set of boundaries of $v'$: this can be done in $O(\log{\log{n}})$ time \cite{Wi83}. Let $\gamma=(u',v')$ be the arc that connects to $v'$ the in-neighbor $u'$ associated with the partition that contains $v$: we return tuple $(u',|\ell(v)|-\gamma.\mathtt{right},i',j')$, where $i'=i-v'.\mathtt{first}+u'.\mathtt{first}$ and $j'=j+u'.\mathtt{last}-v'.\mathtt{last}$ as described in point \ref{point:child} of Property \ref{obs:equivalenceClassInBWT}. Operation $\mathtt{nextSibling}$ can be implemented in the same way.

We read the label of an edge $\gamma$ of $\ST_T$ in $O(\log{\log{n}})$ time per character (operation $\mathtt{edgeChar}$), by storing $\RLCSA_{\REV{T}}$ and the interval in $\BWT_{\REV{T}}$ of the reverse of the maximal repeat that corresponds to every node of the CDAWG. By removing from $\mathtt{id}(v)$ the interval of $\ell(v)$ in $\BWT_T$, we can implement $\mathtt{stringDepth}$, $\mathtt{child}$, $\mathtt{firstChild}$ and $\mathtt{suffixLink}$ in constant time, and $\mathtt{parent}$ and $\mathtt{nextSibling}$ in $O(\log{\log{n}})$ time.
\end{proof}

\begin{corollary}
Let $T \in [1..\sigma]^{n-1}\#$ be a string. There is an implementation of $\ST_T$ that takes $O(\newe_T + \newel_T)$ 
words of space, that computes the matching statistics of a pattern $S \in [1..\sigma]^{m}$ with respect to $T$ in $O(m\log{\log{n}})$ time, and that can be traversed in $O(n\log{\log{n}})$ time and in a constant number of words of space.
\end{corollary}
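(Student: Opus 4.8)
The plan is to run both procedures on top of the representation of $\ST_T$ from Theorem~\ref{thm:suffixTreeRepresentation}, taking the first of the two variants in Table~\ref{tab:suffixTree} (the one whose node identifiers also carry the interval of the node label in $\BWT_T$). Its space is already $O(\newe_T+\newel_T)$ words; I would further store inside every node $v'$ of $\CDAWG_T$ one text position at which $\ell(v')$ occurs (equivalently $\SA_T[v'.\mathtt{first}]$), which costs $O(\newe_T)$ additional words and stays within the claimed bound. From such a position, together with the length stored in a locus identifier, an occurrence of any matched string is recovered in $O(1)$ time, so it only remains to bound the two running times.

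For the matching statistics $\mathsf{MS}[1..m]$ I would use the classical Chang--Lawler scheme. Starting at the root, descend into $\ST_T$ alternating $\mathtt{child}$ steps with character comparisons of $S$ against edge labels obtained through $\mathtt{edgeChar}$, until a mismatch or the end of $S$; the current $\mathtt{stringDepth}$ is $\mathsf{MS}[1]$. To move from the locus of the longest prefix $w_i$ of $S[i..m]$ occurring in $T$ to that of $w_{i+1}$, follow a suffix link to the locus of $w_i[2..|w_i|]$ and resume matching with $S[i+|w_i|..m]$. An interior-edge locus is kept as a pair (nearest explicit ancestor, offset): for the suffix link we take $\mathtt{suffixLink}$ of that ancestor and then re-descend the offset with $\mathtt{child}$ and $\mathtt{stringDepth}$, without further character comparisons since the rescanned symbols are a known prefix of $S[i..m]$. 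Because $S$ contains no $\#$, every locus reached is an internal node or an interior edge position, so only suffix links out of internal nodes (which the representation supports) are ever needed.

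The time bound then follows from the standard amortized argument: there are exactly $m-1$ suffix-link steps; by Ukkonen's node-depth lemma each drops the node depth by at most one, so the total number of downward edge traversals (rescans plus genuine extensions) is $O(m)$; and the number of single-character comparisons against edge labels is likewise $O(m)$, since $\mathsf{MS}[i+1]\ge\mathsf{MS}[i]-1$ bounds the string-depth increments. The whole computation thus performs $O(m)$ operations among $\mathtt{child}$, $\mathtt{parent}$, $\mathtt{stringDepth}$, $\mathtt{suffixLink}$, $\mathtt{edgeChar}$, each costing $O(\log\log n)$ in this representation, for $O(m\log\log n)$ total. I expect the main obstacle to be exactly this bookkeeping: handling interior-edge loci as (explicit node, offset) pairs, performing the rescan through an interface whose suffix links live only at explicit nodes, and verifying that the classical $O(m)$-operations count is preserved --- routine, but the kind of argument where an off-by-one breaks the amortization.

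For the traversal, I would run an iterative depth-first visit that keeps only the identifier of the current node (a constant number of words), the identifier of the root, and one bit telling whether we are currently descending or backtracking. While descending, move to $\mathtt{firstChild}$ if it exists and otherwise switch to backtracking; while backtracking out of a node, move to $\mathtt{nextSibling}$ if it exists (switching back to descending) and otherwise to $\mathtt{parent}$; halt when backtracking returns to the root. Each of the $O(n)$ edges of $\ST_T$ is crossed once downward and once upward; $\mathtt{firstChild}$ costs $O(1)$ while $\mathtt{nextSibling}$ and $\mathtt{parent}$ cost $O(\log\log n)$, so the traversal runs in $O(n\log\log n)$ time using $O(1)$ words of working space on top of the $O(\newe_T+\newel_T)$-word structure.
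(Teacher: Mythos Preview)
Your proposal is correct and follows essentially the same approach as the paper: invoke the folklore matching-statistics algorithm (suffix links, child steps, reading edge characters) on the representation of Theorem~\ref{thm:suffixTreeRepresentation}, and run a stackless DFS using $\mathtt{firstChild}$, $\mathtt{nextSibling}$, $\mathtt{parent}$ for the traversal. The only cosmetic difference is that the paper uses the second (interval-free) variant of Table~\ref{tab:suffixTree} for the traversal while you use the first for both tasks; since both variants give $O(1)$ for $\mathtt{firstChild}$ and $O(\log\log n)$ for $\mathtt{parent}$ and $\mathtt{nextSibling}$, this makes no difference to the bounds.
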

\begin{proof}
We combine the implementation in the first row of Table \ref{tab:suffixTree} with the folklore algorithm for matching statistics, that issues $\mathtt{suffixLink}$ and $\mathtt{child}$ operations on $\ST_T$, and that reads the label of some edges of $\ST_T$. For traversal, we combine the implementation in the second row of Table \ref{tab:suffixTree} with the folklore algorithm that issues just $\mathtt{firstChild}$, $\mathtt{parent}$ and $\mathtt{nextSibling}$ operations.
\end{proof}


By storing $\RLCSA_{\REV{T}}$ in addition to $\RLCSA_T$, and by adding to $\mathtt{id}(v)$ the interval of $\REV{\ell(v)}$ in $\BWT_{\REV{T}}$, we can also implement a bidirectional index on $T$ like those described in \cite{belazzougui2013versatile}, that supports the left and right extension of a string with any character in $O(\log{\log{n}})$ time and that takes $O(\newe + \newel)$ words of space.

\bibliographystyle{plain}
\bibliography{cpm2015}

\newpage

\section*{Appendix}

\subsection*{Lower bound on the number of arcs in the CDAWG}

\begin{lemma}\label{lemma:nArcsInCDAWG}
The number of arcs in $\CDAWG_T$ is $\Omega(\log{n})$ for any string $T \in [1..\sigma]^{n-1}\#$ and any $\sigma<n-1$.
\end{lemma}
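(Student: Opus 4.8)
The plan is to exhibit, for every string $T \in [1..\sigma]^{n-1}\#$ with $\sigma < n-1$, a chain of $\Omega(\log n)$ nested maximal repeats, each being a proper prefix of the next, and to argue that such a chain forces at least $\Omega(\log n)$ arcs in $\CDAWG_T$. The first ingredient is combinatorial: since $\sigma < n-1$, the string $T[1..n-1]$ over $[1..\sigma]$ has length $n-1$, so by pigeonhole there is a character that occurs at least $(n-1)/\sigma > (n-1)/(n-1) = 1$ times; more carefully, some character $a$ occurs at least $\lceil (n-1)/\sigma \rceil \ge 2$ times. I would instead aim directly for a long repeated substring: a standard counting argument shows that if a string of length $N$ over an alphabet of size $\sigma$ has no repeated substring of length $k$, then $N \le \sigma^k + k - 1$ (the number of distinct $k$-mers is at most $\sigma^k$, and a string with all $k$-mers distinct has length at most $\sigma^k + k - 1$). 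Contrapositively, since $|T[1..n-1]| = n-1$ and $\sigma < n-1$, taking $k = \lceil \log_\sigma(n-1) \rceil = \Theta(\log n / \log \sigma)$ we get that $T$ has a repeated substring $U$ of length $\Theta(\log_\sigma n)$.

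Next I would pass from the repeated substring $U$ to a maximal repeat $W$ with $|W| \ge |U|$: extend $U$ to the right as long as it stays a repeat (obtaining a right-maximal repeat), then extend to the left as long as it stays a repeat; the resulting string $W$ is a maximal repeat and $|W| \ge |U| = \Theta(\log_\sigma n)$. Now consider the locus $v$ of $W$ in $\ST_T$. Since the set of left-maximal substrings of $T$ is closed under the prefix operation (this is noted in the excerpt, just before Definition 1), every prefix of $W$ that is left-maximal is itself a maximal repeat as soon as it is also right-maximal; and the relevant prefixes lying on the root-to-$v$ path of $\ST_T$ that are right-maximal are exactly the labels of the internal nodes on that path. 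Thus the internal nodes on the path from the root to $v$ in $\ST_T$ all have maximal-repeat labels, and there are at least as many of them as the number of edges on a root-to-$v$ path in a compacted trie can force — but this is where care is needed, since a compacted path could have few internal nodes.

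To get the logarithm honestly I would instead count arcs of $\CDAWG_T$ directly via string depths. Each node of $\CDAWG_T$ other than the sink corresponds to a maximal repeat, and each arc strictly increases the string depth (the label length) of the maximal repeat it points to by at least $1$; the root corresponds to the empty maximal repeat (depth $0$) and the node $v'$ corresponding to $W$ has depth $|W| = \Theta(\log_\sigma n)$. Hence any path in $\CDAWG_T$ from the root to $v'$ has length at most $|W|$ — that is the wrong direction. The right statement is: every maximal repeat of length $\le |W|$ that is a prefix of $W$ yields a distinct node, and consecutive such nodes (ordered by depth) are joined by a directed path in the CDAWG, so the number of \emph{nodes} on the longest root-to-$v'$ path is what we must lower-bound. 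The main obstacle, therefore, is precisely this: showing that the root-to-$v'$ path in $\CDAWG_T$ (equivalently, the chain of nested maximal-repeat prefixes of $W$) has $\Omega(\log n)$ members, not just that $W$ is long. I expect to resolve it by a doubling/Fine–Wilf argument: among the prefixes of $W$, pick the longest maximal-repeat prefix of length $\le |W|/2$, call it $W_1$; it has length $\ge$ some constant fraction of $|W|$ unless $W$ is highly periodic, and iterating yields a geometrically decreasing sequence of maximal-repeat prefixes $W = W_0 \supsetneq W_1 \supsetneq \cdots$ of length $\Theta(\log |W|)$... which is only $\Theta(\log\log n)$ — so instead one argues directly that \emph{every} prefix length $i \in [0..|W|]$ at which the prefix is right-maximal gives a node, and that consecutive right-maximal prefix lengths differ by a bounded amount on average is false in general; the cleanest fix is to observe that the number of arcs of $\CDAWG_T$ equals $|\mathcal{E}^r_T| + |\mathcal{F}^r_T| \ge |\mathcal{R}_T| $ is not directly helpful either. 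The safe route, and the one I would commit to writing up, is: the CDAWG has a root-to-sink path whose concatenated edge labels spell $T$ itself read along a suitable chain of maximal repeats, so the total edge-label length along that path is $\ge n-1$; since there are at most $\sigma$ outgoing arcs per node and label lengths along a path can be bounded, a fan-out argument gives path length, hence number of arcs, $\Omega(\log_\sigma n) = \Omega(\log n)$ for fixed $\sigma$ — and for growing $\sigma$ one checks the claim degenerates gracefully because $\sigma < n-1$ keeps $\log_\sigma n \ge 1 + \Omega(1)$ only in a weak sense, so the honest bound is $\Omega(\log n / \log \sigma)$, matching the Fibonacci/Thue–Morse tightness cited in the excerpt where $\sigma = 2$.
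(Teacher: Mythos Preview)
Your proposal does not prove the stated bound, and the text itself essentially admits this. There are two concrete failures. First, even granting a maximal repeat $W$ of length $\Theta(\log_\sigma n)$, the root-to-$W$ path in $\CDAWG_T$ can have very few arcs: edge labels in the CDAWG may be long, so a large string depth for $W$ says nothing about the number of arcs on a path to its node. Your doubling / Fine--Wilf sketch does not overcome this, and you note that it yields at best $\Theta(\log\log n)$ nodes. Second, your ``safe route'' fan-out argument at the end gives only $\Omega(\log_\sigma n) = \Omega(\log n/\log\sigma)$, which is strictly weaker than the claimed $\Omega(\log n)$ when $\sigma$ is allowed to grow with $n$; the hypothesis $\sigma<n-1$ does not salvage a base-$2$ logarithm from a base-$\sigma$ one.

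The paper's proof bypasses both obstacles with a short information-theoretic argument: the CDAWG contains $n$ distinct source-to-sink paths (one per suffix of $T$), so $\log_2 n$ bits are needed to distinguish them. In a DAG of out-degree at most $2$, a path of length $k$ is described by $k$ bits, hence some path has length at least $\log_2 n$ and the DAG has at least $\log_2 n$ arcs. For general out-degree one first \emph{binarizes}: replace each node of out-degree $k$ by a small binary tree with $k$ leaves, which at most doubles the number of arcs while preserving the number of source-to-sink paths. Applying the degree-$2$ bound to the binarized DAG and dividing by $2$ gives $\Omega(\log n)$ arcs in $\CDAWG_T$. The key idea you are missing is precisely this path-counting plus binarization step; nothing about long maximal repeats or root-to-$W$ chains is needed.
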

\begin{proof}
$\CDAWG_T$ must contain a path from the source to the node that corresponds to every suffix of $T$, and since such paths are $n$, we need $\log{n}$ bits to discriminate at least one of these paths from the others. If $\sigma=2$, every node of the CDAWG has exactly two outgoing arcs, thus there must be a path from the source to the node associated with a suffix of $T$ that has length at least $\log{n}$. If $\sigma>2$, we can transform $\CDAWG_T$ into a DAG with degree at most two by multiplying the number of nodes and arcs by a factor of at most two. Indeed, if a node $v$ has outdegree $k$, we can replace the arcs that start from $v$ with a tree rooted at $v$ whose leaves are the original destinations of the arcs from $v$: this tree has $k-2$ additional nodes and $2k-2$ arcs. The DAG that results from this transformation must have at least $\log{n}$ arcs, thus the number of arcs in $\CDAWG_T$ is $\Omega(\log{n})$.
\end{proof}

The same proof clearly holds for \emph{left extensions of maximal repeats}, using $\CDAWG_{\REV{T}}$ rather than $\CDAWG_T$.

\subsection*{Proof of Lemma \ref{lemma:correspondence}}

\begin{proof}
Let $v$ be an internal node of $\ST_T$ such that $\ell(v)$ is a maximal repeat of $T$, and let $v'$ be the internal node of $\ST_{\REV{T}}$ such that $\ell(v')=\REV{\ell(v)}$. Then, for every edge $(v,w) \in \mathcal{F}^{r}$ in $\ST_T$ such that $v = \mathtt{parent}(w)$ there is an implicit Weiner link from $v'$ in $\ST_{\REV{T}}$ labeled by the first character of $\ell(v,w)$. Conversely, an implicit Weiner link labeled by character $b \in [0..\sigma]$ from any internal node $v'$ of $\ST_{\REV{T}}$ implies that $1 = |\Sigma^{r}_{\REV{T}}(b\ell(v'))| < |\Sigma^{r}_{\REV{T}}(\ell(v'))|$, therefore it must be that $|\Sigma^{\ell}_{\REV{T}}(\ell(v'))|>1$. It follows that $\REV{\ell(v')}$ is a maximal repeat of $T$, thus there is a node $v$ in $\ST_T$ with $\ell(v)=\REV{\ell(v')}$, and $b$ is the first character of the label of an edge $(v,w) \in \mathcal{F}^{r}$ such that $v = \mathtt{parent}(w)$.

Similarly, for every edge $(v,w) \in \mathcal{E}^{r}$ such that $v = \mathtt{parent}(w)$ there is an explicit Weiner link from $v'$ in $\ST_{\REV{T}}$ labeled by the first character of $\ell(v,w)$. Conversely, an explicit Weiner link labeled by character $b \in [0..\sigma]$ from any internal node $v'$ of $\ST_{\REV{T}}$ with at least two Weiner links implies that string $\REV{\ell(v')}$ is a maximal repeat, and that there is an edge $(v,w) \in \mathcal{E}^{r}$ such that $\ell(v)=\REV{\ell(v')}$, $v=\mathtt{parent}(w)$, and $\ell(v,w)=bV$ for some $V \in [0..\sigma]^*$.
\end{proof}

Lemma \ref{lemma:correspondence} immediately implies that the strings in $\mathcal{M}^{\ell}_T$ label internal nodes of $\ST_T$ that are not the destination of any suffix link. However, there can be internal nodes of $\ST_T$ that are not the destination of any suffix link but that are not maximal repeats.

\end{document}